\newcommand{\eofh}{E_{\min}}
\newcommand{\eofhstar}{E_{\max}}
\newcommand{\set}[1]{\left\{#1\right\}}
\newcommand{\card}[1]{\left|#1\right|}
\newcommand{\setup}{\ensuremath{\mathsf{SetUp}}\xspace}
\newcommand{\derive}{\ensuremath{\mathsf{Derive}}\xspace}
\newcommand{\pub}{\mathsf{Pub}}
\newcommand{\prf}{\mathcal{F}}
\newcommand{\T}{\mathcal{T}}
\newcommand{\C}{\mathcal{C}}
\newcommand{\poset}{\mathcal{P}}
\newcommand{\dset}[2]{\mathord{\downarrow}_{#2}(#1)}
\newcommand{\uset}[2]{\mathord{\uparrow}_{#2}(#1)}
\newcommand{\bridge}[1]{\alpha(#1)}
\newcommand{\generickaf}{\psi}
\newcommand{\bestkaf}[1]{\phi_{#1}}
\newcommand{\gamfunc}[1]{\gamma_{#1}}
\newcommand{\gamfuncp}{\gamma_{\poset}}
\newcommand{\weightfuncp}{\omega_{\poset}}
\newcommand{\parentt}[1]{\mathrm{par}_{\mathcal{T}}(#1)}
\newcommand{\parent}[2]{\mathrm{par}_{#2}(#1)}
\newcommand{\numsecrets}[2]{{\cal S}(#2, #1)}
\newcommand{\getsr}{\gets_{\$}}
\newcommand{\keysp}{\mathcal{K}}
\newcommand{\cD}{\mathcal{D}}
\newcommand{\cA}{\mathcal{A}}
\newcommand{\Expt}{\mathrm{Expt}}
\newcommand{\Adv}{\mathrm{Adv}}
\newcommand{\kist}{{\sf kist}}
\newcommand{\ExpCorrupt}{\mathit{Corrupt}}
\newcommand{\ExpKeys}{\mathit{Keys}}
\newcommand{\opc}[1]{#1_{\C}}
\newcommand{\leqslantc}{\opc{\leqslant}}
\newcommand{\lessdotc}{\opc{\lessdot}}
\newcommand{\opt}[1]{#1_{\mathcal{T}}}
\newcommand{\leqslantt}{\opt{\leqslant}}
\newcommand{\geqslantt}{\opt{\geqslant}}
\newcommand{\gtrdott}{\opt{\gtrdot}}
\newcommand{\lesst}{\opt{<}}
\newcommand{\gtrt}{\opt{>}}
\newcommand{\nlesst}{\opt{\nless}}
\newcommand{\ngtrt}{\opt{\ngtr}}
\newcommand{\intn}{\mathcal{I}}
\newtheorem{theorem}{Theorem}
\newtheorem{lemma}{Lemma}
\newtheorem{proposition}{Proposition}
\newtheorem{corollary}{Corollary}
\newtheorem{definition}{Definition}
\newtheorem{remark}{Remark}
\begin{document}


\begin{frontmatter}    

\title{Cryptographic Enforcement of Information Flow Policies without Public Information via Tree Partitions%
       \thanks{This paper generalizes and extends our earlier results~\cite{CrFaGuJoPo14,CrFaGuJo15a}. 
	       In particular, we define a new form of enforcement scheme that subsumes chain-based~\cite{CrFaGuJo15a} and tree-based enforcement schemes~\cite{CrFaGuJoPo14}. We generalize results specific to these earlier schemes in order to support our more general framework.} 
	      }
\runningtitle{Enforcement of Information Flow Policies without Public Information}

\maketitle

\author[A]{\fnms{Jason} \snm{Crampton}\thanks{Corresponding author: Information Security Group, Royal Holloway, University of London, Egham, TW20 9QY, Egham; +44 1784 443117; \url{jason.crampton@rhul.ac.uk}}},
\author[A]{\fnms{Naomi} \snm{Farley}},
\author[A]{\fnms{Gregory} \snm{Gutin}},
\author[A]{\fnms{Mark} \snm{Jones}}, and
\author[B]{\fnms{Bertram} \snm{Poettering}}
\runningauthor{Crampton, Farley, Gutin, Jones, Poettering}
\address[A]{Royal Holloway, University of London}
\address[B]{Ruhr University Bochum}

\begin{abstract}
We may enforce an information flow policy by encrypting a protected resource and ensuring that only users authorized by the policy are able to decrypt the resource.
In most schemes in the literature that use symmetric cryptographic primitives, each user is assigned a single secret and derives decryption keys using this secret and publicly available information.
Recent work has challenged this approach by developing schemes, based on a chain partition of the information flow policy, that do not require public information for key derivation, the trade-off being that a user may need to be assigned more than one secret.
In general, many different chain partitions exist for the same policy and, until now, it was not known how to compute an appropriate one.

In this paper, we introduce the notion of a tree partition, of which chain partitions are a special case.
We show how a tree partition may be used to define a cryptographic enforcement scheme and prove that such schemes can be instantiated in such a way as to preserve the strongest security properties known for cryptographic enforcement schemes.
We establish a number of results linking the amount of secret material that needs to be distributed to users with a weighted acyclic graph derived from the tree partition.
These results enable us to develop efficient algorithms for deriving tree and chain partitions that minimize the amount of secret material that needs to be distributed.
\end{abstract}

\begin{keyword}
access control \sep 
information flow policies \sep 
cryptographic enforcement \sep 
chains \sep 
forests \sep 
trees
\end{keyword}

\end{frontmatter}


\section{Introduction}\label{sec:intro}

Access control is a fundamental security service in modern computing systems and seeks to restrict the interactions between users of the system and the resources provided by the system.
Traditionally, access control is policy-based, in the sense that a policy is defined by the resource owner(s) specifying those interactions that are authorized.
An attempt by a user to interact with a protected resource, typically called an \emph{access request}, is evaluated by a trusted software component, the \emph{policy decision point} (or \emph{authorization decision function}), to determine whether the request should be permitted (if authorized) or denied (otherwise).
The use of a policy decision point is entirely appropriate when we can assume the policy will be enforced by the same organization that defined it.
However, use of third-party storage, privacy policies controlling access to personal data, and digital rights management all give rise to scenarios where this assumption does not hold.

\emph{Cryptographic access control} provides an alternative way of regulating access to data objects and has attracted considerable attention in recent years.
In this setting, data objects are encrypted and appropriate decryption keys are issued to authorized users.
Research into cryptographic access control began with the seminal work of Akl and Taylor~\cite{AkTa83}, and has seen a resurgence of interest in recent years.
For instance, there has been a considerable amount of research into attribute-based encryption~\cite{BeSaWa07,GoPaSaWa06}, which is regularly used to support access control (see~\cite{YuWaReLo10}, for example).
Attribute-based encryption is based on asymmetric cryptographic primitives, which means that any user is able to control read access to data (by encrypting), while only authorized users may decrypt.
However, access control policies can also be enforced using symmetric cryptographic primitives (often a cheaper alternative to their asymmetric counterparts).
Typically, in this scenario, a specific user -- the data owner -- encrypts all data objects before transmitting them to a storage provider that is only trusted to store data correctly.
Users are able to retrieve data objects from the storage provider (in encrypted form) and only authorized users should be able to decrypt them.

In the symmetric setting, the focus of research has been on enforcing information flow policies~\cite{BeLa76}, not least because many access control requirements may be articulated as information flow policies.
An information flow policy is defined by a partially ordered set of security labels and a function mapping each user and data object to a security label.
A user is authorized to read any data object associated with a security label that is less than or equal to that of the user.

Generally, it is undesirable to explicitly provide a user with all the keys she requires to decrypt protected objects.
Instead, a user is given a small number of secrets from which she is able to derive all keys required.%
\footnote{We could, of course, simply view a set of secrets as a single secret and consider the amount of storage required by that secret.  
	  However, it is more convenient for the analysis later in the paper to consider a set of secrets and the number of elements in that set.}
Hence, a common feature of cryptographic enforcement schemes for information flow policies is the derivation of decryption keys (since possession of the decryption key for label $\ell$ implies authorization for the decryption key for any label $\ell'$ less than $\ell$).
Informally, each security label is associated with a secret (which is issued to every user assigned to that security label) from which decryption keys for all subordinate security labels may be derived.
The scheme may also publish additional information in order to support key derivation.

Therefore, the challenge is to compute efficiently the secrets and decryption keys associated with each security label, subject to constraints on the size of relevant parameters.
Thus a cryptographic enforcement scheme may be characterized by%
\begin{inparaenum}[(i)]
 \item the number of secrets each user is given,
 \item the total number of secrets issued to users,
 \item the amount of auxiliary (public) information required for key derivation, and
 \item the computational effort required for key derivation.
\end{inparaenum}

Many schemes in the literature are space-efficient (on the user side) by providing each user with a single secret (see, for example,~\cite{AtBlFaFr09}), the trade-off being that the amount of public information and derivation time may be substantial.
Moreover, the public information must either be transmitted to each user or made available on some publicly accessible server, both possibilities giving rise to concerns either about costs of transmission and local storage, or availability and authenticity of the information.

Crampton, Daud and Martin~\cite{CrDaMa10} introduced the concept of a \emph{chain-based} cryptographic enforcement scheme, which requires no public information but may require users to store more than one secret.
Subsequent work has established that secure instantiations of chain-based schemes exist~\cite{FrPa11,FrPaPo13}.
Chain-based schemes are based on a decomposition of the poset of security labels into disjoint chains (that are, in some appropriate sense, compatible with the poset).
Informally, the secrets associated with the labels in each chain may be derived in a top-down manner and each user is issued with a number of secrets, at most one from each chain.
Thus the number of secrets required by a user is no greater than the number of chains in the decomposition, which is significantly better, generally, than the naive solution of supplying each user with every secret for which she is authorized.

The motivation for the work in this paper can be summarized in two observations.  
First, there are, in general, many different ways to instantiate a chain-based scheme for a given information flow policy, each instantiation being defined by a particular chain partition of the partially ordered set used to specify the policy.  
The number of secrets and the amount of computation required to derive decryption keys in a given instantiation crucially depends on the chain partition chosen. 
However, existing work in the literature assumes that the chain partition is given as part of the input to the algorithm that outputs the secrets and decryption keys.  
One of the questions we address (in Section~\ref{sec:chain-based-schemes}), therefore, is how to compute the ``best'' chain partition (with respect to some suitable metric) with which to instantiate a chain-based scheme.  
Our second observation is that each security label has at most one parent in the chain decomposition.  
The question we address (in Section~\ref{sec:schemes}) is whether it is possible to generalize chain-based schemes to tree-based schemes, given that each element in a tree also has at most one parent.

Our first set of contributions is associated with the novel concept of a \emph{tree partition} of an information flow policy, from which we define the notion of a \emph{forest-based} cryptographic enforcement scheme for information flow policies.
We prove results establishing how the total number of secrets to be issued to users varies with the structure of the forest and demonstrate that an instantiation of our scheme retains the security property of strong key indistinguishability introduced by Freire, Paterson and Poettering~\cite{FrPaPo13}.
We design and analyze an efficient algorithm for computing a forest that minimizes the total number of issued secrets. 
This work generalizes our previous work on tree-based enforcement schemes~\cite{CrFaGuJoPo14}.
In addition, the more general framework enables us to simplify the techniques and formal exposition.

Our second set of contributions is based on specializing our generic scheme to chain-based schemes.%
\footnote{One disadvantage with forest-based schemes is that one cannot, in general, simultaneously minimize the number of secrets issued on a per-user basis and the total number of secrets issued to users.
Thus, chain-based schemes are still relevant, even though, in general, a forest-based scheme for the same policy will require fewer secrets in total to be issued.}
We prove that the total number of secrets issued is determined by the number of bottom elements of the chains in the chain partition (Lemma~\ref{lem:number-of-keys-from-bottom-elements}).
This, in turn, allows us to prove (Theorem~\ref{thm:chain-partition-only-requires-w-chains}) there exists a chain partition that simultaneously minimizes the number of secrets that need to be issued and the number of chains in the partition (and thus the number of keys each user is required to store).
The last result is of practical importance, since the number of chains provides a tight upper bound on the number  of secrets required by any user. Moreover, the result is somewhat unexpected, as it is not usually possible to simultaneously minimize two different parameters.
Our main contribution (Theorem~\ref{thm:main-theorem} and Section~\ref{sec:chain-partition-requiring-widehat-k-keys}) is to develop an efficient algorithm that enables us to find a chain partition such that the total number of distributed secrets and the number of chains are minimized (with respect to all chain partitions).
Our algorithm is based on finding a minimum cost flow in a network whose construction is based on the technical results in Sections~\ref{sec:schemes}--\ref{sec:chain-based-schemes}.

Overall, then, the contributions of this paper generalize and unify existing work on tree- and chain-based schemes using the novel concept of a tree partition and a forest-based enforcement scheme.  
Central to our work are the results in Section~\ref{sec:tree-based-schemes}, which enable us to link two different characterizations of the additional secrets required, thereby allowing us to describe existing schemes using trees and chains within a single framework and to generalize tree-based enforcement schemes to forest-based schemes.
An important consequence of our results is that there now exist efficient methods for instantiating cryptographic enforcement schemes that require no public information.
We thereby provide rigorous foundations for the development of efficient chain-based enforcement schemes.

The remainder of the paper is organized as follows.
In Section~\ref{sec:background}, we provide the relevant background on cryptographic enforcement schemes, and formally identify the problem.
We also discuss related work, including preliminary versions of the ideas presented in this paper~\cite{CrFaGuJo15a,CrFaGuJoPo14}. 
Then, in Section~\ref{sec:schemes}, we formally define a tree partition and a forest-based cryptographic enforcement scheme for an information flow policy.
We establish some important results connecting the structure of a given forest and the total number of secrets required by the associated cryptographic enforcement scheme.
We also establish that there exist secure instantiations of our scheme and briefly discuss cryptographic primitives that would be suitable for such an instantiation.
In Section~\ref{sec:tree-based-schemes}, we use the theoretical results of Section~\ref{sec:schemes} to develop an efficient algorithm for computing the best tree partition, in terms of the total amount of secret material required.
In Section~\ref{sec:chain-based-schemes}, we prove that there exists a chain-based enforcement scheme in which no user requires more than $w$ keys, where $w$ is the width of the information flow policy; and that the total number of issued secrets in a chain-based enforcement scheme is determined entirely by the number of bottom elements of the chain partition. 
These results, however, are not constructive \emph{per se}.
Accordingly, we also develop an efficient algorithm to derive the best chain partition.
We conclude the paper in Section~\ref{sec:conclusion} with a summary of our contributions and some ideas for future work.

\section{Information Flow Policies}\label{sec:background}
 
We first recall some basic definitions from discrete mathematics and establish some notation.
We then define what is meant by an information flow policy~\cite{BeLa76} and discuss how such policies may be enforced using cryptographic mechanisms.

A \emph{partially ordered set} (or \emph{poset}) is a pair $\poset = (X,\leqslant)$, where $\leqslant$ is a reflexive, anti-symmetric, transitive binary relation on a finite set $X$.
\begin{itemize}
  \item We write $x < y$ to indicate $x \leqslant y$ and $x \ne y$, and we may write $x \geqslant y$ whenever $y \leqslant x$. 
  \item We say $x$ \emph{covers} $y$, or $x$ is a \emph{parent} of $y$, denoted $y \lessdot x$, if $y < x$ and there does not exist $z \in X$ such that $y < z < x$. An element $x\in X$ is  \emph{maximal} if it has no parents. 
  \item The \emph{Hasse diagram} of $\poset$ is the directed acyclic graph $H(\poset) = (X,\eofh)$, where the (directed) \emph{edge} $xy \in \eofh$ if and only if $y \lessdot x$.
	We will also make use of the directed acyclic graph $H^*(\poset) = (X,\eofhstar)$, where $xy \in \eofhstar$ if and only if $y < x$.	
	Representing the covering relation as an acyclic digraph the Hasse diagram provides a minimal amount of information required to reconstruct the full order relation.%
	 \footnote{The Hasse diagram $H(\mathcal{P})=(X, \lessdot) = (X,\eofh)$ is a unique representation of the poset ${\mathcal{P} = (X,\leqslant)}$. 
		   Conversely, as the Hasse diagram $H(\poset)$ of a poset $\poset$ uniquely represents $\poset$, we may consider $H(\poset)$ as a ``shorthand'' for $\poset$ and even loosely say that $H(\poset)$ is a poset.}
\item The \emph{in-degree} (\emph{out-degree}, respectively) of node $u$ of a directed graph $D=(V,E)$ is the number of nodes $v$ such that $vu\in E$ ($uv\in E$, respectively). A directed graph $D$ is an \emph{out-forest} if every node of $D$ has in-degree less than or equal to $1$.
      
      We say  $\poset$ is a \emph{forest} if $H(\poset)$ is an out-forest.
	We say $\poset$ is a \emph{tree} if it is a forest and has a unique maximal element.
	That is, there is a single node in its Hasse diagram of in-degree $0$. 
	
	Note that every forest is a disjoint union of trees.
	Hasse diagrams of a poset, forest and tree are shown in Figure~\ref{fig:hasse-diagrams}.
	The edges in these Hasse diagrams (and all others in the paper) are assumed to be directed from top to bottom.
  \item A set $Y \subseteq X$ is a \emph{chain} if for all distinct pairs of elements $x, y \in Y$, $x <y$ or $y < x$. A chain corresponds to a directed path in $H^*(\poset)$.
  \item A \emph{chain partition} of poset $\poset$ is a disjoint union of chains such that every element of $\poset$ belongs to one of the chains.
  Figure~\ref{fig:hasse-diagram-chain-partition} depicts a chain partition of the poset in Figure~\ref{fig:hasse-diagram-poset}.
  \item Let $x,y \in X$ with $y<x$. 
	Then $\set{z_0, \dots, z_l} \subseteq X$, where $x = z_0 \gtrdot z_1 \gtrdot \dots \gtrdot z_l=y$ is a \emph{derivation chain} (from $x$ to $y$) in $\poset$ of length $l$. 
	A derivation chain from $x$ to $y$ corresponds to a directed path from $x$ to $y$ in $H(\poset)$.
  \item We write $x \shortparallel y$ to indicate that $x,y$ are incomparable, i.e. $x \not\leqslant y$ and $x \not\geqslant y$.  
	A set $Y \subseteq X$ is an \emph{antichain} if for all distinct $x,y \in Y$, $x \shortparallel y$. The \emph{width} of a poset is the cardinality of an antichain of maximum size. 
  \item We write $\dset{x}{}$ to denote $\{y \in X : y \leqslant x\}$ and $\uset{x}{}$ to denote $\{y \in X : y \geqslant x \}$.  
	Note that $\dset{x}{} \subseteq \dset{y}{}$ if and only if $x \leqslant y$. 
\item A {\em linear extension} of $\cal P$ is a chain $(X, \preccurlyeq)$ such that if $x \leqslant y$ then $x \preccurlyeq y$.
  Every (finite) partial order has at least one linear extension, which may be computed, in linear time, by representing the partial order as a directed acyclic graph and using a topological sort~\cite[\S 22.3]{CoLeRiSt09}.
\end{itemize}

 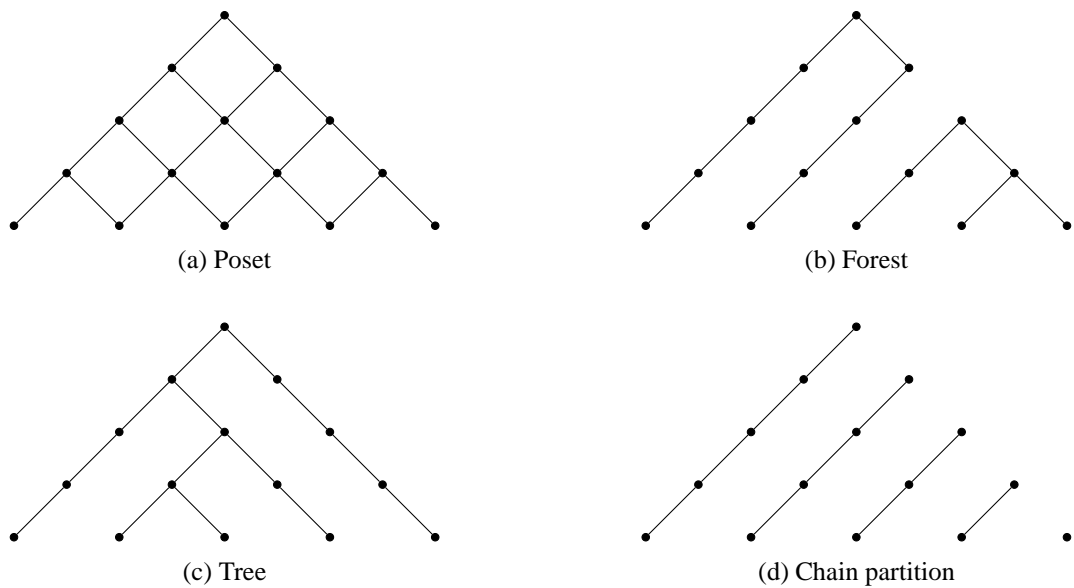
\begin{figure}[h]\centering
  \begin{subfigure}{.475\textwidth}\centering
   \begin{tikzpicture}[vtx/.style={circle,draw,fill,inner sep=1pt},x=1cm,y=.5cm,scale=.7]
    \node[vtx] (11) at (0,0) {};
    \node[vtx] (22) at (2,0) {};
    \node[vtx] (33) at (4,0) {};
    \node[vtx] (44) at (6,0) {};
    \node[vtx] (55) at (8,0) {};
    \node[vtx] (12) at (1,2) {};
    \node[vtx] (23) at (3,2) {};
    \node[vtx] (34) at (5,2) {};
    \node[vtx] (45) at (7,2) {};
    \node[vtx] (123) at (2,4) {};
    \node[vtx] (234) at (4,4) {};
    \node[vtx] (345) at (6,4) {};
    \node[vtx] (1234) at (3,6) {};
    \node[vtx] (2345) at (5,6) {};
    \node[vtx] (12345) at (4,8) {};
    \draw (12345) -- (1234);
    \draw (12345) -- (2345);
    \draw (1234) -- (123);
    \draw (1234) -- (234);
    \draw (2345) -- (234);
    \draw (2345) -- (345);
    \draw (123) -- (12);
    \draw (123) -- (23);
    \draw (234) -- (23);
    \draw (234) -- (34);
    \draw (345) -- (34);
    \draw (345) -- (45);
    \draw (12) -- (11);
    \draw (12) -- (22);
    \draw (23) -- (22);
    \draw (23) -- (33);
    \draw (34) -- (33);
    \draw (34) -- (44);
    \draw (45) -- (44);
    \draw (45) -- (55);
    \end{tikzpicture}
    \caption{Poset}\label{fig:hasse-diagram-poset}
   \end{subfigure}
   \hfill
  \begin{subfigure}{.475\textwidth}\centering
   \begin{tikzpicture}[vtx/.style={circle,draw,fill,inner sep=1pt},x=1cm,y=.5cm,scale=.7]
    \node[vtx] (11) at (0,0) {};
    \node[vtx] (22) at (2,0) {};
    \node[vtx] (33) at (4,0) {};
    \node[vtx] (44) at (6,0) {};
    \node[vtx] (55) at (8,0) {};
    \node[vtx] (12) at (1,2) {};
    \node[vtx] (23) at (3,2) {};
    \node[vtx] (34) at (5,2) {};
    \node[vtx] (45) at (7,2) {};
    \node[vtx] (123) at (2,4) {};
    \node[vtx] (234) at (4,4) {};
    \node[vtx] (345) at (6,4) {};
    \node[vtx] (1234) at (3,6) {};
    \node[vtx] (2345) at (5,6) {};
    \node[vtx] (12345) at (4,8) {};
    \draw (12345) -- (1234);
    \draw (12345) -- (2345);
    \draw (1234) -- (123);
    \draw (2345) -- (234);
    \draw (123) -- (12);
    \draw (234) -- (23);
    \draw (345) -- (34);
    \draw (345) -- (45);
    \draw (12) -- (11);
    \draw (23) -- (22);
    \draw (34) -- (33);
    \draw (45) -- (44);
    \draw (45) -- (55);
    \end{tikzpicture}
    \caption{Forest}\label{fig:hasse-diagram-forest}
   \end{subfigure}
  
   \vspace*{\baselineskip}
   
  \begin{subfigure}{.475\textwidth}\centering
   \begin{tikzpicture}[vtx/.style={circle,draw,fill,inner sep=1pt},x=1cm,y=.5cm,scale=.7]
    \node[vtx] (11) at (0,0) {};
    \node[vtx] (22) at (2,0) {};
    \node[vtx] (33) at (4,0) {};
    \node[vtx] (44) at (6,0) {};
    \node[vtx] (55) at (8,0) {};
    \node[vtx] (12) at (1,2) {};
    \node[vtx] (23) at (3,2) {};
    \node[vtx] (34) at (5,2) {};
    \node[vtx] (45) at (7,2) {};
    \node[vtx] (123) at (2,4) {};
    \node[vtx] (234) at (4,4) {};
    \node[vtx] (345) at (6,4) {};
    \node[vtx] (1234) at (3,6) {};
    \node[vtx] (2345) at (5,6) {};
    \node[vtx] (12345) at (4,8) {};
    \draw (12345) -- (1234);
    \draw (12345) -- (2345);
    \draw (1234) -- (123);
    \draw (1234) -- (234);
    \draw (2345) -- (345);
    \draw (123) -- (12);
    \draw (234) -- (23);
    \draw (234) -- (34);
    \draw (345) -- (45);
    \draw (12) -- (11);
    \draw (23) -- (22);
    \draw (23) -- (33);
    \draw (34) -- (44);
    \draw (45) -- (55);
    \end{tikzpicture}
    \caption{Tree}\label{fig:hasse-diagram-tree}
   \end{subfigure}
   \hfill
  \begin{subfigure}{.475\textwidth}\centering
   \begin{tikzpicture}[vtx/.style={circle,draw,fill,inner sep=1pt},x=1cm,y=.5cm,scale=.7]
    \node[vtx] (11) at (0,0) {};
    \node[vtx] (22) at (2,0) {};
    \node[vtx] (33) at (4,0) {};
    \node[vtx] (44) at (6,0) {};
    \node[vtx] (55) at (8,0) {};
    \node[vtx] (12) at (1,2) {};
    \node[vtx] (23) at (3,2) {};
    \node[vtx] (34) at (5,2) {};
    \node[vtx] (45) at (7,2) {};
    \node[vtx] (123) at (2,4) {};
    \node[vtx] (234) at (4,4) {};
    \node[vtx] (345) at (6,4) {};
    \node[vtx] (1234) at (3,6) {};
    \node[vtx] (2345) at (5,6) {};
    \node[vtx] (12345) at (4,8) {};
    \draw (12345) -- (1234);
    \draw (1234) -- (123);
    \draw (2345) -- (234);
    \draw (123) -- (12);
    \draw (234) -- (23);
    \draw (345) -- (34);
    \draw (12) -- (11);
    \draw (23) -- (22);
    \draw (34) -- (33);
    \draw (45) -- (44);
    \end{tikzpicture}
    \caption{Chain partition}\label{fig:hasse-diagram-chain-partition}
   \end{subfigure}
   \caption{Hasse diagrams of a poset, a forest, a tree, and a chain partition}\label{fig:hasse-diagrams}
 \end{figure}

In many cases we will use subscripts to denote a function or relation relative to a poset $\T$. 
Thus, for example, we write $\T = (X, \leqslantt)$, we write $x \gtrdott y$ if $x >_\mathcal{T} y$ and there is no $z \in X$ such that $x >_\mathcal{T} z >_\mathcal{T} y$, and we write $\dset{x}{\T}$ to denote the set $\{y \in X: y \leqslantt x\}$.

\begin{definition}
An \emph{information flow policy} is a tuple \mbox{$(X,\leqslant,U,O,\lambda)$}, where:
  \begin{itemize}
    \item $(X,\leqslant)$ is a (finite) partially ordered set of \emph{security labels};
    \item $U$ is a set of \emph{users} and $O$ is a set of \emph{objects};
    \item $\lambda: U \cup O \rightarrow X$ is a \emph{security function} that associates users and objects with security labels.
  \end{itemize}
A user $u \in U$ is \emph{authorized} to read an object $o \in O$ if and only if $\lambda(u) \geqslant \lambda(o)$.
\end{definition}

Given an information flow policy $(X,\leqslant,U,O,\lambda)$, we may define an equivalence relation $\sim$ on $U$, where, for any $u, v \in U$, $u \sim v$ if and only if $\lambda(u) = \lambda(v)$.
We write $U_x$ to denote $\set{u \in U : \lambda(u) = x}$. 
Similarly, $O_x \subseteq O$ denotes the set of objects having security label $x \in X$.
In other words, user $u \in U_y$ is authorized to read $o \in O_x$ whenever $y \geqslant x$.
Henceforth, we will represent an information flow policy $(X,\leqslant,U,O,\lambda)$ as a poset $\poset = (X,\leqslant)$ with the tacit understanding that $U$, $O$ and $\lambda$ are given.

\subsection{Cryptographic enforcement}\label{sec:Cryptographicenforcement}

The intuition behind the cryptographic enforcement of information flow policies is to encrypt data objects (using a symmetric encryption algorithm) and distribute appropriate secrets to authorized users (from which encryption keys are derived).
Hence, there are two high-level algorithms that every cryptographic enforcement scheme (CES) provides: the first, \setup, is run by the data owner and generates secrets, keys and any public information that is required for deriving decryption keys; the second, \derive, is used to derive decryption keys from secrets and public information.
That is, in principle, \setup and \derive have the following functionality.
\begin{itemize}
 \item \setup takes as input an information flow policy $(X,\leqslant)$.
 
       \setup outputs $\set{(x,\sigma(x), \kappa(x)): x \in X}$ and $\pub$, where $\sigma(x)$ and $\kappa(x)$ respectively determine the secret and encryption key associated with $x$, and the public information $\pub$ is used as part of the input to the \derive algorithm.
 \item \derive takes as input the information flow policy, $\pub$, $x,y \in X$ and $\sigma(x)$.
       
       \derive outputs $\kappa(y)$ if $y \leqslant x$ (and some distinguished failure symbol $\bot$ otherwise); in particular, $\kappa(x)$ can be derived from $\sigma(x)$.
\end{itemize}

Prior CES schemes follow the above syntactical framework more or less closely. 
In particular, different representations of the information flow policy have been used as input to the \setup and \derive algorithms, and some preprocessing may be required in order to produce those representations.
Some schemes, for example, simply use the Hasse diagram of the poset~\cite{AtBlFaFr09} as the input to \setup and (part of) the input to \derive, while others use a directed, acyclic graph whose edge set is a superset of $\eofh$ and a subset of $\eofhstar$ (and thus contains the same paths as the Hasse diagram)~\cite{AtBlFr07,Cr11}.
In this work, we transform the information flow policy into a partition of trees.

Part of the specification of \derive ensures the \emph{correctness} of a scheme.
That is, an authorized user belonging to $U_x$ must be able to derive $\kappa(y)$ if $x \geqslant y$.
In contrast, the \emph{security} of a CES requires that users cannot derive keys for which they are not authorized, even if they collude by pooling secret information.
In particular, a user in $U_z$ where $z \not\geqslant y$ cannot derive $\kappa(y)$.
Research in the last 10 years, pioneered by~Atallah, Blanton, Frikken and Fazio~\cite{AtBlFaFr09} and Ateniese, de Santis, Ferrara and Masucci~\cite{AtDeFeMa12}, has formalized security notions for CESs.
Informally, the adversary learns the secrets and keys associated with some set of elements $A \subseteq X$ (modeling a group of colluding users) and selects a ``target'' $x$ in $X$ such that $x \not\leqslant a$ for any $a \in A$ (to avoid trivial cases). 
The adversary may be asked to determine $\kappa(x)$ or to determine, given a candidate key $r$, whether $r$ is $\kappa(x)$ or a random element of the key space.
These informal scenarios lead to formal concepts of and definitions for \emph{key recovery} and \emph{key indistinguishability}~\cite{AtBlFaFr09}.%
\footnote{Note that a scheme in which \derive may be used to compute $\kappa(y)$ from $\kappa(x)$ whenever $y < x$ (rather than from $\sigma(x)$) does not possess the key indistinguishability property: the adversary may select $x$ and $A$ such that $x \gtrdot a$ for some $a \in A$, use $x$, $a$, $\pub$ and $r$ (that is, assume $r = \kappa(x)$) as inputs to the \derive algorithm, and test the output for equality with $\kappa(a)$.  
	  Concerns about key indistinguishability in CESs led to the separation between secrets and keys~\cite{AtBlFaFr09}.}
We consider the security properties of CESs in more detail in Section~\ref{sec:security-analysis}.

\subsection{Related Work}\label{sec:related-work}

Essentially, designing a cryptographic enforcement scheme comes down to defining%
\begin{inparaenum}[(i)]
 \item what \emph{secrets} each user will receive, 
 \item how users will generate any \emph{keys} they require to decrypt data objects, and 
 \item how secrets and keys are related.
\end{inparaenum}
Broadly speaking, there are two standard ways of designing a cryptographic enforcement scheme for information flow policies.
These methods assume each user is given a single key from which all other relevant secrets and key may be derived, and are distinguished by the information used to derive secrets and keys.
The first method, which we will call ``node-based'', relies only on secret information known to the user, while the second, which we will call ``edge-based'', assumes that some additional information must be made known to all users.%
\footnote{There are some other types of schemes but each of them suffer from a number of disadvantages (see~\cite{CrMaWi06}, for example) so research has tended to focus on node- and edge-based schemes.}

Informally, a node-based scheme uses one-way functions: for $y < x$ the secret associated with $y$ is some (one-way) function of $\sigma(x)$, the secret $\sigma(x)$ associated with $x$, and $\kappa(y)$, the key associated with $y$, is some (one-way) function of $\sigma(y)$.
Some of the earliest work on cryptographic enforcement of information flow policies used these kinds of techniques~\cite{Sa88}.
However, in this setting, it is unclear how to distribute secrets such that $\sigma(y)$ can be derived from $\sigma(x_i)$ for each of the parents $x_1,\dots,x_n$ that node $y$ might have, without simultaneously exposing the scheme to collusion attacks.

In an edge-based scheme, public information is associated with each pair $(x,y)$ where $x > y$ from which $\sigma(y)$ can be extracted with knowledge of $\sigma(x)$.
Thus, informally, we might define $\pub(x,y)$ to be $\mathrm{enc}_k(\sigma(y))$, where $\mathrm{enc}_k$ is some symmetric encryption algorithm with key~$k$ contained in $\sigma(x)$.
An edge-based scheme can be used for arbitrary posets but requires public information~\cite{AtBlFaFr09}.

Research into schemes that allocate a single secret to each user investigated what trade-offs were possible between the number of items of public data and the number of key derivation operations (in the worst case)~\cite{AtBlFr06,Cr11}.
Some of this work focused on posets with a particular structure (such as chains~\cite{AtBlFr06}).
Such research was able to define specific data structures and algorithms, and perform exact complexity analyses~\cite{AtBlFr06,AtBlFr07,Cr11}.
Other work considered arbitrary posets and used results from graph and poset theory to develop analyses that were generic but arguably less useful in specific cases~\cite{AtDeFeMa12}.
In all this work, the amount of public information required for key derivation necessarily increases.

A representation of the policy is required as input to the \derive\ algorithm.
Hence, the data owner must publish the policy (or distribute it with the appropriate secrets to every user).
The size of the policy is proportional to the number of edges (each representing a piece of public information) used for secret derivation; that is $O(n^2)$, where $n$ is the cardinality of $X$ (the set of security labels).
However, compact representations, using an $n \times n$ binary matrix, exist.
In the case of edge-based schemes, the data owner must also publish (or otherwise distribute) $\pub$, which is also proportional in size to the number of edges.
However, the size of $\pub$ will be several orders of magnitude bigger than the policy representation (due to the relative sizes of each datum of information).
An alternative is to store $\pub$ on a public server.
In this case, the server must be on-line and accessible to any user that wishes to run the \derive\ algorithm.
Thus, it may be advantageous to devise schemes that require no public information.

Crampton \emph{et al.}~\cite{CrDaMa10} introduced the idea of cryptographic enforcement schemes, based on chain partitions of the information flow policy, that require no public information.
The trade-off with such schemes is that some users may require more than one secret in order to be able to derive all the required encryption keys.
Subsequent work established that secure instantiations of such schemes are possible~\cite{FrPa11,FrPaPo13}.

To summarize, informally, the core trade-off made when designing a CES is the amount of public information that is required to assist in the derivation of secrets against the number of additional secrets that are associated with nodes.
Broadly speaking, on the one hand one assumes each node is associated with a single secret and defines a ``secret-derivation digraph'' $G = (X,E)$, where $E_{\min} \subseteq E \subseteq E_{\max}$.
(In other words, if $x > z$ in $(X,\leqslant)$ there is a derivation path in $G$, since $E \supseteq E_{\min}$; and if $x \ngtr z$ there is no derivation path in $G$, since $E \subseteq E_{\max}$.)
On the other hand, one selects a secret-derivation digraph $G =(X,E)$ such that $E \subset E_{\min}$, $G$ is an out-forest, and each node is associated with at least one secret.
Then, if $x > z$, there is some node $y$ such that every user in $U_x$ is given the secret  associated with node $y$ and there is a directed path from $y$ to $z$ in $G$.
Figure~\ref{tbl:scheme-comparison} provides a crude comparison of the generic schemes in the literature: $E$ is the set of edges used to derive secrets; $d$ is the length of the longest directed path in $G = (X,E)$; $w$ is the width of $X$; $n$ is the cardinality of $X$. 

\begin{figure}[h]\centering
 \begin{tabularx}{\textwidth}{llXXX}
 \toprule
  \bf Generic scheme & \bf Edge set & \bf Public \mbox{information} & \bf Derivation time & \raggedright\arraybackslash \bf Secrets per node \\
 \midrule
  Single-step secret derivation & $E = E_{\max}$ & $O(\card{E})$ & $O(1)$ & $k = 1$ \\
  Multi-step  secret derivation & $E_{\min} \subseteq E \subset E_{\max}$ & $O(\card{E})$  & $O(d)$ & $k = 1$\\
  Chain-based  secret derivation & $E \subset E_{\min}$ & None & $O(d)$ & $k \in [1,w]$ \\
  All secrets distributed & $E = \emptyset$ & None & $0$ & $k \in [1,n]$ \\
 \bottomrule
 \end{tabularx}
 \caption{A high-level comparison of generic cryptographic enforcement schemes}\label{tbl:scheme-comparison}
\end{figure}

The significant open problem with prior work on chain-based schemes is the assumption that the chain partition is part of the input to the \setup algorithm: there may be many such partitions and it is not immediately obvious how one should select a specific partition in order to optimize characteristics of the corresponding enforcement scheme (an example being to minimize the number of secrets issued).
Hence, it seems very natural to ask how difficult it is to compute a ``good'' chain partition, given that %
\begin{inparaenum}[(i)]
 \item schemes based on chain partitions do not require public information, and 
 \item the number of secrets that need to be distributed to users is determined by the choice of chain partition. 
\end{inparaenum}
Our recent work~\cite{CrFaGuJo15a} shows that it is possible to compute a minimal chain partition in polynomial time using a minimum cost network flow algorithm.

Crampton \emph{et al.}~\cite{CrFaGuJoPo14} made use of the fact that derivation paths are uniquely defined in trees (as well as in chains) to develop the idea of a tree-based cryptographic enforcement scheme.  
Their work established that it was possible to compute (in polynomial time) an optimal tree for the information flow policy. 
  
\subsection{Problem overview}

While chain-based enforcement schemes require no public information, some users may be required to store more than one {secret}, unlike the majority of schemes in the literature.
The number of secrets required by an instantiation of such a scheme depends on the chain partition chosen.
Moreover, a natural extension of the chain-based approach, explored in the current work, is to use a forest related to the poset defining the information flow policy.
In this paper, therefore, we explore three questions:
\begin{itemize}
 \item What is the optimal choice of chain partition and can we compute such a partition efficiently?
 \item How do we implement a cryptographic enforcement scheme based on a  partition of the information flow policy into trees rather than chains?
 \item What is the optimal choice of tree partition and can we compute such a partition efficiently?
\end{itemize}
In the next section, we consider the second of these questions, the results of which enable us to answer the other two questions.

\section{Enforcement Schemes from Tree Partitions}\label{sec:schemes}
 
 In this section, we generalize the approach taken by Crampton \emph{et al.}~\cite{CrDaMa10} for chain-based enforcement schemes, and Crampton \emph{et al.}~\cite{CrFaGuJoPo14} for tree-based enforcement schemes.
 In particular, we introduce the concept of a tree partition of a poset $(X,\leqslant)$ and show how such a partition may be used to construct a cryptographic enforcement scheme for an information flow policy defined by $(X,\leqslant)$.
 
 \begin{definition}
  Let $\poset = (X,\leqslant)$ be a poset, with Hasse diagram $H(\poset) = (X,E)$.
  A \emph{tree partition} of $\poset$ is a poset $\T = (X, \leqslantt)$ such that $H(\T) = (X,E_{\T})$ is an out-forest and $E_{\T} \subseteq E$.
 \end{definition}

 If $\poset = (X, \leqslant)$ is a poset, $\T = (X,\leqslantt)$ is a tree partition of $\poset$ and $y \nless x$, then $y \nlesst x$.
 However, we may have $y < x$ but $y \nlesst x$.
 Thus, the problem with a tree partition, in the context of cryptographic enforcement schemes (CESs), is that some authorized labels that were ``reachable'' by a derivation chain in $\poset$ will no longer be reachable in $\T$.
 Accordingly, we define the notion of forest-based enforcement scheme for a tree partition of $\poset = (X,\leqslant)$.

 \begin{definition}
  Given an information flow policy $\poset = (X,\leqslant)$ and a tree partition $\T = (X,\leqslantt)$, a \emph{forest-based enforcement scheme} is a pair $(\T,\generickaf)$, where $\generickaf : X \rightarrow 2^X$ and:
  \begin{enumerate}
   \item if $u \leqslant x$ then there exists $z \in \generickaf(x)$ such that $u \leqslantt z$;
   \item if $u \not\leqslant x$ then for all $z \in \generickaf(x)$, $u \not\leqslantt z$.
  \end{enumerate}
 \end{definition}

 Informally, conditions 1 and 2 correspond to the correctness and security requirements of CESs, respectively.
 Note that $x \in \generickaf(x)$.
 To see this, suppose, in order to obtain a contradiction, that $x \not\in \generickaf(x)$.
 Then, by the first property, there exists $z \in \generickaf(x)$ such that $x \lesst z$. 
 This implies $x < z$ and thus $z \not\leqslant x$. 
 By the second property for all $z^* \in \generickaf(x)$ we then have $z \not\leqslantt z^*$. 
 This holds in particular for $z^*=z$ and we obtain $z \not\leqslantt z$, a contradiction.
 
 \begin{definition}
  Let $\poset$ be a poset and $\T$ a tree partition of $\poset$.
  Then, given $x,z \in X$, the maximum element (if it exists) in $\dset{x}{\poset} \cap \uset{z}{\T}$, is the \emph{anchor} between $x$ and $z$ and denoted by $\bridge{xz}$.  
 \end{definition}
 
 We note the following facts, which we state without proof:
 \begin{itemize}
  \item $\bridge{xz}$ exists iff $x \geqslant z$;
  \item $\bridge{xz}$ is a unique maximal element (that is, a maximum element) since $\uset{z}{\T}$ is a chain;
  \item if $x \geqslant z$ and $x \gtrt z$ then there exists a derivation chain in $\T$ from $x$ to $z$ and $\bridge{xz} = x$ (since $x$ is the maximum element in $\dset{x}{\poset}$); and
  \item if $x \geqslant z$ and $x \ngtrt z$ then there exists a derivation chain in $\T$ from $\bridge{xz}$ to $z$ and $x > \bridge{xz}$.
 \end{itemize}

 Given $\poset$ and a tree partition $\T$, define $\bestkaf{\T} : X \rightarrow 2^{X}$ as follows: 
 \[
  \bestkaf{\T}(x) = \set{\bridge{xz} : x \geqslant z}
 \]

 \begin{proposition}
  For any poset $\poset$ and any tree partition $\T$ of $\poset$, $(\T,\bestkaf{\T})$ is a forest-based enforcement scheme.
 \end{proposition}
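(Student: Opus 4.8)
The plan is to verify the two defining conditions of a forest-based enforcement scheme directly, relying entirely on the stated properties of the anchor and on the observation (noted just before the definition of a tree partition) that $\leqslantt$ refines $\leqslant$, that is, $y \leqslantt x$ implies $y \leqslant x$. Since $\bestkaf{\T}(x) = \set{\bridge{xz} : x \geqslant z}$, every member of $\bestkaf{\T}(x)$ is an anchor, and the two facts I will exploit are that each anchor $\bridge{xz}$ lies in $\dset{x}{\poset}$ (hence is $\leqslant x$) and in $\uset{z}{\T}$ (hence is $\geqslantt z$).

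For the correctness condition, I would take an arbitrary $u$ with $u \leqslant x$ and exhibit an explicit element of $\bestkaf{\T}(x)$ lying above $u$ in $\T$. The natural witness is $\bridge{xu}$: since $u \leqslant x$ gives $x \geqslant u$, the anchor $\bridge{xu}$ exists (first anchor fact) and therefore $\bridge{xu} \in \bestkaf{\T}(x)$ by definition. As $\bridge{xu} \in \uset{u}{\T}$, we have $u \leqslantt \bridge{xu}$, which is precisely the inequality required.

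For the security condition, I would take arbitrary $u \not\leqslant x$ and arbitrary $z \in \bestkaf{\T}(x)$, and show $u \not\leqslantt z$. Writing $z = \bridge{xw}$ for some $w$ with $x \geqslant w$, membership of $\bridge{xw}$ in $\dset{x}{\poset}$ gives $z \leqslant x$. Arguing by contradiction, suppose $u \leqslantt z$; then $u \leqslant z$ because $\leqslantt$ refines $\leqslant$, and transitivity together with $z \leqslant x$ yields $u \leqslant x$, contradicting $u \not\leqslant x$. Hence $u \not\leqslantt z$ for every $z \in \bestkaf{\T}(x)$.

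The two halves are almost symmetric: correctness uses only that an anchor sits above its lower argument in $\T$, and security uses only that an anchor sits below $x$ in $\poset$ together with the refinement $\leqslantt \Rightarrow \leqslant$. I expect the only delicate point to be the security direction, where the hypothesis $u \leqslantt z$ must first be transported from the $\T$-order into the $\poset$-order before transitivity can be applied; the refinement property is exactly what licenses that step. Beyond the stated facts about anchors, no existence or maximality reasoning is needed, so there is no substantial obstacle once those facts are in place.
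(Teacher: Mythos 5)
Your proof is correct and takes essentially the same approach as the paper: the paper's (one-line) proof uses the identical witness $\bridge{xu}$ for the correctness condition, and its security claim is exactly your argument that any $z \in \bestkaf{\T}(x)$ satisfies $z \leqslant x$, so $u \leqslantt z$ would force $u \leqslant x$. You merely spell out the refinement step ($u \leqslantt z$ implies $u \leqslant z$) that the paper leaves implicit, which is a harmless elaboration.
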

 
 \begin{proof}
  If $u \leqslant x$, then $z = \bridge{xu}$ belongs to $\bestkaf{\T}(x)$ and $u \leqslantt z$.
  And if $u \not\leqslant x$ then for every $z\in \phi_{\T}(x)$ we have $z\not\geqslantt u$.
 \end{proof}
 
  In other words, given the secrets corresponding to the elements in $\bestkaf{\T}(x)$, a user in $U_x$ can derive the secret for all elements $z \leqslant x$ using a derivation chain starting at $\bridge{xz}$.

 \begin{lemma}\label{lem:phi-f-is-best-enforcement-scheme}
  Let $\poset = (X,\leqslant)$ be a poset, $\T$ be a tree partition of $\poset$, and $(\T,\psi)$ be a forest-based enforcement scheme.
  Then \mbox{$\bestkaf{\T}(x) \subseteq \psi(x)$} for all $x \in X$.  
 \end{lemma}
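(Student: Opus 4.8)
The plan is to establish the minimality of $\bestkaf{\T}$ by fixing $x \in X$, taking an arbitrary element $a \in \bestkaf{\T}(x)$, and showing $a \in \psi(x)$. By definition $a = \bridge{xz}$ for some $z \leqslant x$, so $a$ is the maximum element of $\dset{x}{\poset} \cap \uset{z}{\T}$; in particular $a \leqslant x$ and, since $a \in \uset{z}{\T}$, also $z \leqslantt a$. Because $a \leqslant x$, the first (correctness) property of the scheme, applied with $u = a$, produces some $z' \in \psi(x)$ with $a \leqslantt z'$. The goal then reduces to proving $z' = a$, whence $a = z' \in \psi(x)$.

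One of the two required inequalities comes for free: since $\T$ is a tree partition we have $E_{\T} \subseteq E$, so the relation $\leqslantt$ refines $\leqslant$ (as recorded in the remark that $y \nless x$ implies $y \nlesst x$), and therefore $a \leqslantt z'$ yields $a \leqslant z'$. For the reverse inequality I would show that $z'$ itself lies in the set $\dset{x}{\poset} \cap \uset{z}{\T}$, so that the maximality of the anchor $a$ forces $z' \leqslant a$. That $z' \in \dset{x}{\poset}$, i.e.\ $z' \leqslant x$, follows from the second (security) property exactly as in the argument that $x \in \psi(x)$: were $z' \not\leqslant x$, then applying property~2 with $u = z'$ to the member $z' \in \psi(x)$ would give the contradiction $z' \not\leqslantt z'$. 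That $z' \in \uset{z}{\T}$, i.e.\ $z \leqslantt z'$, follows by transitivity of $\leqslantt$ from $z \leqslantt a$ and $a \leqslantt z'$.

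Combining $z' \leqslant a$ (from maximality of $a$) with $a \leqslant z'$ (from refinement) and antisymmetry of $\leqslant$ gives $a = z'$, hence $a \in \psi(x)$; since $a$ and $x$ were arbitrary, $\bestkaf{\T}(x) \subseteq \psi(x)$ for all $x$. The main obstacle is conceptual rather than computational: one must see that \emph{both} defining conditions of a forest-based enforcement scheme are needed in tandem — the correctness condition to supply a witness $z' \in \psi(x)$ lying above $a$ in $\T$, and the security condition to confine $z'$ below $x$ in $\poset$ — so that $z'$ is trapped inside precisely the set $\dset{x}{\poset} \cap \uset{z}{\T}$ whose maximum is the anchor $a$, collapsing $z'$ onto $a$.
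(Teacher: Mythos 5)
Your proof is correct and takes essentially the same route as the paper's: correctness supplies a witness $z' \in \psi(x)$ with $a \leqslantt z'$, the security condition (via the same self-application used to show $x \in \psi(x)$) forces $z' \leqslant x$, and the maximality of the anchor in $\dset{x}{\poset} \cap \uset{z}{\T}$ then pins $z'$ down. The only difference is cosmetic: you argue directly that $z' = a$ by antisymmetry, whereas the paper assumes $a \notin \psi(x)$ and derives a contradiction from a strictly $\T$-larger element of that intersection.
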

 
 \begin{proof}
  Suppose, in order to obtain a contradiction, that $y \in \bestkaf{\T}(x)$ and $y \not\in \psi(x)$.
  By definition, $y \leqslant x$; therefore, there must exist $y' \in \psi(x)$ such that $y' \gtrt y$, and thus $x \geqslant y'$.
  Moreover, $y \geqslantt z$ so we have $x \geqslant y' \gtrt y \gtrt z$; that is, $y' \in \dset{x}{\poset} \cap \uset{z}{\T}$.
  Thus $y$ is not the maximal element in $\dset{x}{\poset} \cap \uset{z}{\T}$, the desired contradiction.
 \end{proof}
 
 The following simple lemma characterizes the elements of $\bestkaf{\T}$ and will be used to prove Proposition \ref{pro:computation-time-for-phi} and Theorem \ref{thm:building-optimal-tree-partition}.
 
 \begin{lemma}\label{lem:phi-x-equals-gamma-y-z}
   Let $\T = (X,\leqslantt)$ be a tree partition of poset $\poset = (X,\leqslant)$.
   Then for every $x$ in $X$ and every $z$ in $X$, $z \in \bestkaf{\T}(x)$ if and only if exactly one of the following conditions holds:%
    \begin{inparaenum}[(i)] 
      \item $z = x$;
      \item $z<x$, $z$~has a parent in $\T$ and $x \not\geqslant \parentt{z}$;
      \item $z < x$ and $z$ has no parent  in ${\cal T}$.
    \end{inparaenum}
 \end{lemma}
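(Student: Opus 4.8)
The plan is to establish the biconditional after first reducing ``exactly one'' to ``at least one''. The three conditions are pairwise mutually exclusive: (i) forces $z = x$ while (ii) and (iii) force $z < x$, and (ii) and (iii) are separated by whether $z$ has a parent in $\T$. Hence it suffices to prove that $z \in \bestkaf{\T}(x)$ if and only if (i), (ii), or (iii) holds. Throughout I would lean on two facts available from the excerpt: the observation that $y \lesst x$ implies $y < x$ (so $\leqslantt$ refines $\leqslant$, and on the chain $\uset{z'}{\T}$ the $\leqslant$-maximum and $\leqslantt$-maximum of any subset coincide), together with the definitions that $z \in \bestkaf{\T}(x)$ means $z = \bridge{xz'}$ for some $z'$ with $x \geqslant z'$, and that $\bridge{xz'}$ is the (unique) maximum of $\dset{x}{\poset} \cap \uset{z'}{\T}$.

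For the forward direction I would start from a witness $z'$ with $z = \bridge{xz'}$. Membership of $z$ in $\dset{x}{\poset}$ gives $z \leqslant x$, so either $z = x$, landing in (i), or $z < x$. In the latter case, if $z$ has no parent in $\T$ we are in (iii); otherwise write $p = \parentt{z}$. The key step is to exclude $x \geqslant p$: from $p \gtrt z \geqslantt z'$ we get $p \in \uset{z'}{\T}$, so $x \geqslant p$ would place $p$ in $\dset{x}{\poset} \cap \uset{z'}{\T}$ strictly above $z$, contradicting the maximality of $z = \bridge{xz'}$. Thus $x \not\geqslant p$ and condition (ii) holds.

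For the converse I would exhibit $z$ explicitly as an anchor and verify maximality, taking $z' = z$ in cases (ii) and (iii) and $z' = x$ in case (i). Case (i) is immediate since $x$ is the top of $\dset{x}{\poset}$, so $\bridge{xx} = x = z$. Case (iii) is immediate because a parentless $z$ is maximal in $\T$, whence $\uset{z}{\T} = \set{z}$ and $\bridge{xz} = z$. Case (ii) needs a short argument: any element of $\uset{z}{\T}$ lying strictly above $z$ lies above $p = \parentt{z}$, so if it also lay in $\dset{x}{\poset}$ we would obtain $x \geqslant p$, contrary to hypothesis; hence $z$ is the maximum of $\dset{x}{\poset} \cap \uset{z}{\T}$ and $z = \bridge{xz} \in \bestkaf{\T}(x)$.

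I do not expect a serious obstacle, since the lemma is essentially an unwinding of the anchor definition against the fact that $\uset{z}{\T}$ is a chain. The one point demanding care is the asymmetric role of the witness $z'$: membership $z \in \bestkaf{\T}(x)$ only asserts that \emph{some} such $z'$ exists, so the forward direction must reason from an arbitrary witness, whereas the converse is free to set $z' = z$. Keeping that asymmetry straight, and confirming $p \in \uset{z'}{\T}$ via $p \gtrt z \geqslantt z'$, is where the bookkeeping matters most.
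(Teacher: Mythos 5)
Your proof is correct and takes essentially the same route as the paper's: both directions unwind the anchor definition and check whether $z$ is the maximum of $\dset{x}{\poset} \cap \uset{z}{\T}$, using the parent $\parentt{z}$ as the witness to non-maximality, in exactly the case split of the lemma. If anything, your forward direction is slightly more careful than the paper's, which tacitly identifies membership $z \in \bestkaf{\T}(x)$ with $z = \bridge{xz}$, whereas you justify passing from an arbitrary witness $z'$ (with $z = \bridge{xz'}$) via $\parentt{z} \gtrt z \geqslantt z'$.
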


 \begin{proof} 
 Suppose 
  $x \geqslant z$ and $x \not\geqslant \parentt{z}$. Since $x \not\geqslant \parentt{z} \gtrdot_{\cal T} z$,
  $z$ is the maximal element in $\dset{x}{\poset} \cap \uset{z}{\T}$.
  Similarly, if $z$ has no parent or $z=x$, then $z$ is the maximal element in $\dset{x}{\poset} \cap \uset{z}{\T}$.
  In either case, $z = \bridge{xz}$ and $z \in \bestkaf{\T}(x)$.
  
  Conversely, if $z \in \bestkaf{\T}(x)$, then $x \geqslant z$, by definition, and $\bridge{xz} = z$.
  Thus, $x \not\geqslant \parentt{z}$ if $z$ has a parent (otherwise, $\parentt{z} \in \dset{x}{\poset} \cap \uset{z}{\T}$ and $z \ne \bridge{xz}$).
 \end{proof}


 
 \begin{proposition}\label{pro:computation-time-for-phi}
  Let $\poset=(X,\leqslant)$ be an information flow policy and let $\T=(X,\mathop{\leqslantt})$ be a tree partition.
  Then $\bestkaf{\T}$ can be computed in time $O(n^2)$, where $n = \card{X}$.
 \end{proposition}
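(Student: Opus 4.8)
The plan is to turn the characterization of Lemma~\ref{lem:phi-x-equals-gamma-y-z} into a doubly-nested loop that, for each ordered pair $(x,z) \in X \times X$, decides in constant time whether $z \in \bestkaf{\T}(x)$. By that lemma, $z \in \bestkaf{\T}(x)$ holds precisely when $z = x$, or $z < x$ while $z$ has no parent in $\T$, or $z < x$ while $z$ has a parent and $x \not\geqslant \parentt{z}$. Each of these tests reduces to a membership query in the order relation of $\poset$ together with a lookup of $\parentt{z}$, so provided both are available in $O(1)$ time, the body of the inner loop costs $O(1)$ and the whole computation costs $O(n^2)$.

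First I would carry out two preprocessing steps. To obtain $\parentt{z}$ for every $z$, I scan the edge set $E_{\T}$ of the out-forest $H(\T)$; since $H(\T)$ has at most $n-1$ edges and every node has in-degree at most $1$, recording the parent of each node (or marking it as a root) takes $O(n)$ time. To answer the queries $z < x$ and $\parentt{z} \leqslant x$ in constant time, I take the order relation $\leqslant$ of $\poset$ to be given as the $n \times n$ Boolean matrix discussed in Section~\ref{sec:related-work} (equivalently, the adjacency matrix of $H^*(\poset)$, whose edges are exactly the pairs with $y < x$). Reading this matrix already costs $\Theta(n^2)$, so it does not dominate the stated bound.

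With the preprocessing in place, the main loop ranges over all $x \in X$ and, for each, over all $z \in X$, appending $z$ to $\bestkaf{\T}(x)$ whenever the disjunction above evaluates to true; this performs $O(n^2)$ constant-time checks and yields the claimed running time. The bound is moreover essentially optimal: when $E_{\T} = \emptyset$ every element is a root, so by the lemma $\bestkaf{\T}(x) = \dset{x}{\poset}$, and if in addition $\poset$ is a chain the combined size of $\set{\bestkaf{\T}(x) : x \in X}$ is $1 + 2 + \dots + n = \Theta(n^2)$, so no algorithm can even write the output faster.

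I expect the only genuine subtlety to be the constant-time order queries. If the input supplies merely the Hasse diagram $H(\poset)$ rather than the full relation, one must first compute the transitive closure, and propagating down-sets in reverse topological order does not give $O(n^2)$ for dense posets. The clean route, and the one I would take, is therefore to treat $O(1)$ access to $\leqslant$ as part of the input representation, consistent with the binary-matrix encoding already assumed in the paper; the combinatorial content of the proposition then rests entirely on Lemma~\ref{lem:phi-x-equals-gamma-y-z}.
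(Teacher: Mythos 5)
Your proposal is correct and follows essentially the same route as the paper's proof: both apply the characterization in Lemma~\ref{lem:phi-x-equals-gamma-y-z} inside a loop over all pairs $(x,z)$, with $O(1)$-time order and parent queries available after $O(n^2)$ preprocessing. Your added care about \emph{where} the constant-time $\leqslant$ queries come from (the $n\times n$ matrix as input, rather than transitive closure of the Hasse diagram, which need not fit in $O(n^2)$) and the output-size lower bound are sound refinements of the same argument, not a different approach.
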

 
 \begin{proof}  
%
  By Lemma~\ref{lem:phi-x-equals-gamma-y-z}, for all $x \in X$, besides $x$ itself, we add all those elements $z\in X$, $z<x$, to $\bestkaf{\T}(x)$ that are either maximal in $\T$ or, if not, satisfy $x \not\geqslant \parentt{z}$.
  In both cases, we must determine whether $x > z$ for some $z \in X$.
  
After $O(n^2)$ time preprocessing, we may assume that we have data structures allowing us to check whether $x > z$ in $O(1)$ time, and test whether $z$  is a maximal element in $T$ (and compute $\parentt{z}$ otherwise) in $O(1)$ time. 
  Hence, we can compute $\bestkaf{\T}$ in $O(n^2)$ time.
 \end{proof}

\subsection{Generic instantiation}\label{subsec:generic-instantiation}

The above results enable us to specify the algorithms of a cryptographic enforcement scheme.
The construction can be considered a generalization of the one using chains (rather than trees) defined by Freire \emph{et al.}~\cite{FrPaPo13}.
When defining $\setup$ and $\derive$ we assume that the information flow policy $\poset=(X,\leqslant)$ is presented in the form of a tree partition $\T = (X,\leqslantt)$,
and that for the latter a specific forest-based enforcement scheme $(\T,\psi)$ has been selected
(such as $(\T,\bestkaf{\T})$).
Further, for the $n=\lvert X\rvert$ labels of~$X$ we assume a numbering convention that follows a (reverse) linear extension $\prec$ of~$\leqslant$;
more precisely, we assume that $X=\{x_1,\ldots,x_n\}$ where $x_n \prec x_{n-1} \prec \dots \prec x_2 \prec x_1$ 
(in particular, $x_n$ is a minimal element in $X$ and $x_1$ is a maximal element).
The cryptographic building block of our construction is a pseudorandom function (PRF) where the key space and the output space are the same set~$\keysp$.
Given such a function $\prf\colon\keysp\times\set{0,1}^*\to\keysp$ and an (injective) label naming function $\ell\colon X\to\set{0,1}^*$ we define:

\bigskip\noindent Algorithm~$\setup$, on input an information flow policy in the format described above:
\begin{enumerate}
 \item For $i=1$ to $n$ do (i.e., count from a maximal down to a minimal label):
\label{line:setupprfderivation}
   \begin{itemize}
   \item if $x_i$ is maximal in $(X,\leqslantt)$ pick fresh random key $s(x_i)\getsr\keysp$;
   \item otherwise, identify the (unique) parent $y$ of $x_i$ in $\T$ and assign $s(x_i)\gets\prf(s(y),\ell(x_i))$ (where $s(y)$ is the PRF key and $\ell(x_i)$ is the PRF input);
   \end{itemize}
 \item For each $x \in X$ output $\sigma(x)=\{(v,s(v)):v\in\generickaf(x)\}$ and $\kappa(x) = \prf(s(x),\ell(x))$; 
   no public information is needed, i.e., $\pub=\emptyset$.
\end{enumerate}
The general principle of this CES is to derive secrets in a top-down fashion:
top nodes (according to $\leqslantt$) are assigned random keys, 
and the keys of all other nodes are deterministically derived from their parent using the PRF.
Observe that, as we arranged $\prec$ to be a linear extension of $\leqslant$ (and thus $\leqslantt$), step~\mbox{(\ref{line:setupprfderivation})} of $\setup$ is actually well-defined.
We next define the corresponding $\derive$ algorithm:

\bigskip\noindent Algorithm~$\derive$, on input the information flow policy, labels $x,y\in X$, and secret $\sigma(x)$:
\begin{enumerate}
 \item Return $\bot$ if $x \not\geqslant y$;
 \item Identify the (unique) $z\in\generickaf(x)$ such that $y \leqslantt z$ and recover $s(z)$ from $\sigma(x)$;
 \item Let $z=z_0 \gtrdot z_1 \gtrdot \dots \gtrdot z_m=y$ be the complete derivation chain in $\T$ between $z$ and $y$;
 \item For $i=1$ to $m$ do: $s(z_i)\gets\prf(s(z_{i-1}),\ell(z_i))$;
 \item Output $\kappa(y) = \prf(s(y),\ell(y))$.
\end{enumerate}

In this instantiation, the same pseudorandom function $\prf$ is used as a secret- and key-generation function;
secret values, and values derived from secret values, serve as PRF keys, and fixed strings that uniquely identify the corresponding node are its inputs.

\subsection{Security analysis}\label{sec:security-analysis}

We assess the security of our enforcement scheme using the principles of provable security.
We start by formalizing the properties of the cryptographic building block, the pseudorandom function~$\prf$.
Our definition is not the most general possible: rather, it is tailored to the requirements of our construction; 
specifically, we require that the keyspace and the range of the PRF are the same set.

\begin{definition}
  \label{def:prfadvantage}
  A \emph{pseudorandom function} (PRF) with keyspace and range~$\keysp$ is any efficient function $\prf\colon\keysp\times\set{0,1}^*\to\keysp$.
  We also write $\prf_{K}(x)$ to denote $\prf(K,x)$.
  We define the \emph{advantage} of an adversary~$\cD$ in distinguishing $\prf$ from a random function as   
  \[
  \Adv^\prf(\cD)=\left\lvert\Pr[K\getsr\keysp;\cD^{\prf_K}\Rightarrow 1]-\Pr[\varphi\getsr\langle\set{0,1}^*\to\keysp\rangle;\cD^{\varphi}\Rightarrow 1]\right\rvert\, .
  \]
  We say that PRF $\prf$ is \emph{$(\epsilon,\tau)$-indistinguishable} from a random function if $\epsilon$ upper-bounds the advantage of all distinguishers~$\cD$ that run in time at most~$\tau$.
\end{definition}

In the definition above, $\langle\set{0,1}^*\to\keysp\rangle$ denotes the universe of all functions mapping $\set{0,1}^*$ to~$\keysp$,
and writing ``$\cD^F\Rightarrow 1$'' for a function~$F$ means that algorithm~$\cD$ has oracle access to $F$ and terminates outputting value~$1$.
In Definition~\ref{def:prfadvantage}, $F$~either implements access to a keyed PRF instance $\prf_K$, or it implements a completely random function.
That is, the smaller we can choose~$\epsilon$, the closer a particular PRF $\prf$ is to a random function.
We discuss some practical candidate functions in Section~\ref{sec:PRFinpractice}.

We next make precise the level of security that we target for our enforcement scheme.
Many different cryptographic models for CES with security guarantees of various strengths have been proposed
(see~\cite{CaSaMa14} for a comparative overview).
The notion we target and reproduce below, strong key indistinguishability~\cite{FrPaPo13},
was not only proven to imply all other notions (i.e., to define the highest level of security),%
\footnote{\cite{CaSaMa14} show that not all of these implications are strict;
in particular strong key indistinguishability is polynomially equivalent to the notion of (plain) key indistinguishability of~\cite{AtBlFaFr09}, with tightness loss $n=\lvert X\rvert$.
Note also our model considers a static setup where the challenge label is fixed a~priori.
    A variant of Definition~\ref{def:kist} would consider dynamic adversaries: such an adversary is able to choose the challenge label $x$ \emph{during} the experiment, rather than having it fixed as one of the experiment's parameters. 
	    However, it has been shown that static and dynamic definitions of strong key indistinguishability are polynomially equivalent~\cite{FrPaPo13};
            corresponding results for (plain) key indistinguishability have also been obtained~\cite{AtDeFeMa12}.
	    To simplify the exposition, therefore, we restrict our attention to the static case.}
but is also, we believe, the most natural and versatile one.
It is based on the security experiment $\Expt^{\kist,b}_{X,x}$ defined in Fig.~\ref{fig:kist}, where we use the following notation:
  \begin{align*}
   \bar{\sigma}&=\set{(v,\sigma(v)): v \in X}, \\
   \bar{\kappa}&=\set{(v,\kappa(v)): v \in X}, \\
   \ExpCorrupt_{X,x}&=\set{(v,\sigma(v)): v\in X, x\not\leqslant v}, \\
   \ExpKeys_{X,x}&=\set{(v,\kappa(v)):v\in X\setminus\{x\}}.   
  \end{align*}
In the experiment we assume that the adversary receives the information flow policy~$(X, \leqslant)$ in the same format as the $\setup$ algorithm does.

\begin{figure}[ht]
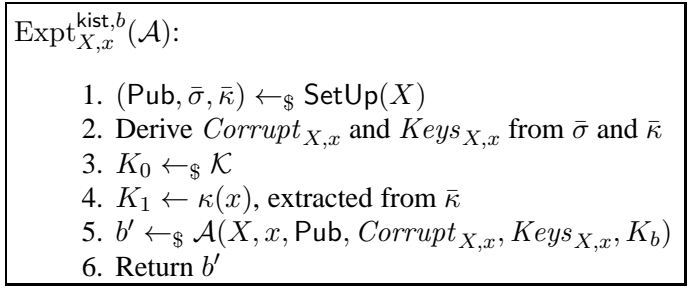

 \newlength{\mywidth}
 \settowidth{\mywidth}{\qquad\qquad Derive $\ExpCorrupt_{X,x}$ and $\ExpKeys_{X,x}$ from $\bar{\sigma}$ and $\bar{\kappa}$}
  \centering
    \fbox{\begin{minipage}[t]{\mywidth}
      $\Expt^{\kist,b}_{X,x}(\cA)$: \\[1mm]\null\quad
    \begin{minipage}[t]{\mywidth}
    \begin{enumerate}
    \item $(\pub,\bar{\sigma},\bar{\kappa})\getsr\setup(X)$
    \item Derive $\ExpCorrupt_{X,x}$ and $\ExpKeys_{X,x}$ from $\bar{\sigma}$ and $\bar{\kappa}$\label{lin:expderivekeys}
    \item $K_0\getsr\keysp$\label{lin:K0}
    \item $K_1\gets \kappa(x)$, extracted from $\bar{\kappa}$\label{lin:K1}
    \item $b'\getsr\cA(X,x,\pub,\ExpCorrupt_{X,x},\ExpKeys_{X,x},K_b)$\label{lin:expcalladv}
    \item Return $b'$
    \end{enumerate}
  \end{minipage}%
  \end{minipage}}%
  \caption{Security experiment for strong key indistinguishability}
  \label{fig:kist}
\end{figure}

\begin{definition}
  \label{def:kist}
  Let $(X,\leqslant)$ be an arbitrary poset.
  A CES for $(X,\leqslant)$ is \emph{$(\epsilon,\tau)$-strongly key indistinguishable with respect to static adversaries}~\cite{FrPaPo13} if, for all $x\in X$, the advantage of all adversaries~$\cA$ that interact in experiment $\Expt^{\kist,b}_{X,x}(\cA)$ and run in time at most~$\tau$ is bounded by~$\epsilon$, where we define
  \[
  \Adv^{\kist}_{X,x}(\cA)=\left\lvert\Pr\left[\Expt^{\kist,1}_{X,x}(\cA)\Rightarrow1\right]-\Pr\left[\Expt^{\kist,0}_{X,x}(\cA)\Rightarrow1\right]\right\rvert.
  \]
\end{definition}

Observe that in this definition the adversary obtains, in principle, all secrets embedded in the system (that is, all $\sigma(x)$ and $\kappa(x)$ values), excluding only those that would allow distinguishing the challenge key by trivial means (e.g., by invoking the $\derive$ algorithm).

The final step of our analysis is to prove that our forest-based enforcement scheme from the preceding section is strongly key indistinguishable in the sense of Definition~\ref{def:kist}.
More precisely, we have the following result.

\begin{theorem}\label{thm:security}
  For any poset $(X,\leqslant)$, $x\in X$, and adversary~$\cA$ that runs in time at most~$\tau$, there exists a constant $0\leqslant c\leqslant\lvert X\rvert$ and distinguishers $\cD^0_1,\ldots,\cD^0_c$, $\cD^1_1,\ldots,\cD^1_c$ against the underlying PRF such that
  \[
  \Adv^{\kist}_{X,x}(\cA) \quad\leqslant\quad \Adv^\prf(\cD^0_1)+\dots+\Adv^\prf(\cD^0_c)+\Adv^\prf(\cD^1_1)+\dots+\Adv^\prf(\cD^1_c)
  \]
  and the respective running times are at most $\tau^b_i=\tau+O(\lvert X\rvert)$.
  That is, if the PRF is $(\epsilon',\tau+O(\lvert X\rvert))$-indistinguishable then our CES construction is $(\epsilon,\tau)$-strongly key indistinguishable with $\epsilon=2\lvert X\rvert \epsilon'$.
\end{theorem}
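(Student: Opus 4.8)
The plan is to bound $\Adv^{\kist}_{X,x}(\cA)$ by a hybrid argument that progressively replaces the pseudorandom derivations along the unique tree-path leading to $x$ with truly random values, charging each replacement to the PRF indistinguishability of Definition~\ref{def:prfadvantage}. Let $r = v_0 \gtrdott v_1 \gtrdott \dots \gtrdott v_k = x$ be the derivation chain in $\T$ from the root $r$ of the tree containing $x$ down to $x$ (so every $v_i$ satisfies $x \leqslantt v_i$). The combinatorial heart of the proof is the observation that none of the seeds $s(v_0),\dots,s(v_k)$ is ever handed to $\cA$: since $\ExpCorrupt_{X,x}$ consists of the pairs $(w,s(w))$ with $w \in \psi(v)$ for some $v$ satisfying $x \not\leqslant v$, it suffices to show $v_i \notin \psi(v)$ whenever $x \not\leqslant v$. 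This is exactly security condition~2 of the forest-based enforcement scheme $(\T,\psi)$ applied with the element $x$: from $x \not\leqslant v$ we get $x \not\leqslantt z$ for all $z \in \psi(v)$, whereas $x \leqslantt v_i$; taking $z = v_i$ yields the required contradiction. Note also that each $s(v_i)$ only ever appears as a PRF key (the released value $\kappa(v_i)=\prf(s(v_i),\ell(v_i))$ is an output, not the key), so its entire footprint in $\cA$'s view is as a key --- precisely the separation of secrets from keys that makes the reduction possible.

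Next I would set up the hybrids. For $b \in \set{0,1}$ and $0 \leqslant t \leqslant k+1$ define the game $G^b_t$ as $\Expt^{\kist,b}_{X,x}$ modified so that, for every $i < t$, the evaluation $\prf_{s(v_i)}(\cdot)$ is replaced by an independent truly random function $\varphi_i\colon\set{0,1}^*\to\keysp$; this affects the derived seed $s(v_{i+1})$, the key $\kappa(v_i)$, and the seeds of the off-path children of $v_i$. Thus $G^b_0 = \Expt^{\kist,b}_{X,x}$ and $G^b_{k+1}$ is a fully randomized ``ideal'' game. In the step from $G^b_t$ to $G^b_{t+1}$ the only change is replacing $\prf_{s(v_t)}$ by $\varphi_t$, and $s(v_t)$ is a fresh uniform value in $G^b_t$ (a random root seed when $t=0$, and $\varphi_{t-1}(\ell(v_t))$ otherwise, which is fresh since $\ell$ is injective). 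A distinguisher $\cD^b_{t+1}$ therefore samples all independent seeds itself, answers the two distributions' derivations out of $v_t$ by querying its PRF/random oracle on the labels $\ell(\cdot)$, computes everything else with the genuine PRF, runs $\cA$ on the resulting $(\ExpCorrupt_{X,x},\ExpKeys_{X,x},K_b)$, and echoes $\cA$'s output; this gives $\card{\Pr[G^b_t\Rightarrow1]-\Pr[G^b_{t+1}\Rightarrow1]} \leqslant \Adv^\prf(\cD^b_{t+1})$, and the simulation costs $O(\card{X})$ on top of running $\cA$, so $\cD^b_{t+1}$ runs in time $\tau+O(\card{X})$.

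Finally I would observe that the two ideal endpoints coincide as distributions: in $G^1_{k+1}$ the challenge $\kappa(x)=\varphi_k(\ell(x))$ is uniform and independent of the rest of $\cA$'s view (the point $\ell(x)$ is queried to $\varphi_k$ only once, again by injectivity of $\ell$), which is exactly how the independent key $K_0$ is distributed in $G^0_{k+1}$; and the remaining components of $\cA$'s view are identically (fully path-randomized) distributed in both. Hence $\Pr[G^1_{k+1}\Rightarrow1]=\Pr[G^0_{k+1}\Rightarrow1]$, and a telescoping gives
\[
 \Adv^{\kist}_{X,x}(\cA) = \card{\Pr[G^1_0\Rightarrow1]-\Pr[G^0_0\Rightarrow1]} \leqslant \sum_{t=1}^{k+1}\Adv^\prf(\cD^1_t) + \sum_{t=1}^{k+1}\Adv^\prf(\cD^0_t),
\]
which is the claimed inequality with $c = k+1 \leqslant \card{X}$; the corollary $\epsilon = 2\card{X}\epsilon'$ then follows by bounding each of the $2c$ advantages by $\epsilon'$. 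The hard part will not be the cryptographic bookkeeping but getting the combinatorics exactly right --- namely pinning down that it is the whole root-to-$x$ path whose seeds must be (and can legitimately be) randomized, and verifying via condition~2 that every one of these seeds is withheld from the adversary. A secondary delicate point is arranging the two hybrid chains so that they terminate at a common ideal game, which is what produces the two families $\cD^0_\bullet$ and $\cD^1_\bullet$ and the factor~$2$.
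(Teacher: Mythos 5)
Your proposal is correct and follows essentially the same route as the paper's proof: for each $b\in\set{0,1}$ a chain of hybrid games successively replaces the PRF instances keyed by seeds withheld from the adversary (processed top-down, so that each key being swapped out is already uniform) with independent random functions, the two chains meet in a common ideal game in which $\kappa(x)=\varphi_k(\ell(x))$ is uniform and, by injectivity of $\ell$, independent of the rest of the view, and telescoping yields the bound with the factor~$2$. The only difference is the index set: you hybridize over just the tree chain $\uset{x}{\T}$, justified via condition~2 of the forest-based scheme, whereas the paper hybridizes over all of $\uset{x}{\poset}$ in (reverse) linear-extension order --- your variant yields a slightly smaller constant $c$ (the depth of $x$ in its tree plus one, rather than $\lvert\set{x'\in X : x\leqslant x'}\rvert$) but is otherwise the same argument and gives the same stated bound $\epsilon = 2\lvert X\rvert\epsilon'$.
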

 
\begin{proof}
  The argument proceeds using a sequence of $\lvert X\rvert = n$ hybrid games that interpolate between experiments $\Expt^{\kist,0}_{X,x}$ and $\Expt^{\kist,1}_{X,x}$.
  In each hybrid step, if specific conditions are met, we replace one PRF instance by a random function;
  from the point of view of the adversary, the distance between each two consecutive hybrids is not greater than $\Adv^\prf(\cD)$ for a specific PRF distinguisher~$\cD$.

  Fix a poset $(X,\leqslant)$ together with a (reverse) linear extension $x_n \prec x_{n-1} \prec \dots \prec x_2 \prec x_1$ of~$X$,
  a label $x\in X$, and a CES adversary~$\cA$ that runs in time at most~$\tau$.
  We use sequence $x_n \prec \dots \prec x_1$ to define our hybrid experiments:
  For $b\in\set{0,1}$, we set $G^b_0=\Expt^{\kist,b}_{X,x}$ and define games $G^b_1,\ldots,G^b_n$ (in that order) such that if $1\leq k\leq n$ and $x_k\geqslant x$ then the difference between games~$G^b_k$ and~$G^b_{k-1}$ is precisely that all PRF invocations with key $s(x_k)$ are replaced by assignments with values drawn uniformly at random from~$\keysp$
  (correspondingly, also the keys considered in lines~(\ref{lin:expderivekeys}) and~(\ref{lin:K1}) are changed).
  For the remaining indices~$k$, i.e., in case $x_k\not\geqslant x$, games $G^b_k$ and~$G^b_{k-1}$ are identical.
  Let $S^b_k$ denote $\Pr[G^b_k(\cA)\Rightarrow1]$ for all~$b,k$.

  Observe that we replace PRF invocations by random assignments for precisely those labels $x\in X$ that do not have a corresponding entry in $\ExpCorrupt_{X,x}$.
  Observe also that, as we consider the labels in a suitable order, for all switchings from a PRF to a random function we have that the corresponding PRF key $s(x)$ was replaced with a uniform random value before.
  Thus, the difference between any two consecutive games is bounded by a PRF advantage: 
  by a standard reductionist argument, in the cases $x\leqslant x_k$, we have 
   \begin{equation}\label{eq:diff-games}
     \lvert S^b_k-S^b_{k-1}\rvert=
     \lvert\Pr[G^b_k(\cA)\Rightarrow 1]-\Pr[G^b_{k-1}(\cA)\Rightarrow 1]\rvert=\Adv^\prf(\cD)
     \enspace,     
   \end{equation}
  for a specific distinguisher~$\cD$ with running time approximately $\tau+\lvert X\rvert\cdot T_{\rm prf}\in\tau+O(\lvert X\rvert)$,
  where $T_{\rm prf}$ is the time required for one PRF evaluation;
  in addition, whenever $x\not\leqslant x_k$ we have $G^b_k=G^b_{k-1}$ and hence $\lvert S^b_k-S^b_{k-1}\rvert=0$.
  Now, by repeated application of the triangle inequality and~\eqref{eq:diff-games}, we have
  \[
    \left\lvert S^b_0-S^b_n\right\rvert
    \leqslant \sum_{k=1}^n \left\lvert S^b_{k-1}-S^b_{k}\right\rvert
       \leqslant \sum_{k=1}^c \Adv^\prf(\cD^b_k)
  \enspace,
  \]
  where $c=\lvert\{x'\in X:x\leqslant x'\}\rvert$ and distinguishers $\cD^b_k$ are constructed as specified.
  We now consider games $G^0_n$ and $G^1_n$.  
  In both cases $\kappa(x)$ is picked uniformly at random, thus lines~(\ref{lin:K0}) and~(\ref{lin:K1}) in the experiment implement the same operation.  
  Hence $G^0_n$ is identical to $G^1_n$ and $\left\lvert S^0_n-S^1_n\right\rvert=0$. 
  Thus, we obtain
  \begin{eqnarray*}
  \Adv^{\kist}_{X,x}(\cA)&=&\lvert S^1_0-S^0_0\rvert\leqslant
  \lvert S^1_0-S^1_n\rvert+\lvert S^1_n-S^0_n\rvert+\lvert S^0_n-S^0_0\rvert\\
  & \leqslant & \Adv^\prf(\cD^1_1)+\ldots+\Adv^\prf(\cD^1_c) + 0 + \Adv^\prf(\cD^0_1)+\ldots+\Adv^\prf(\cD^0_c)
  \end{eqnarray*}
  as required.
\end{proof}

Note that by results of~\cite{CaSaMa14} it would have sufficed to prove (plain) key indistinguishability of our scheme, as the latter would imply the notion of strong key indistinguishability that we target.
Observe however that going this way introduces a tightness loss of $n=\lvert X\rvert$.
Besides saving this factor, we believe our direct approach is also more intuitive.

\subsection{On practical instantiations of the PRF component}
\label{sec:PRFinpractice}

We now briefly consider how one might instantiate our CES in practice.
Although pseudorandom functions are a standard building block in the domain of provable security, 
corresponding constructions do not explicitly appear in most international cryptographic standards documents (e.g., by ANSI, IEEE, NIST, IETF, etc.).
However, certain standardized MACs and block ciphers can be used as a PRF replacement, as we discuss next.

The primary aim of message authentication codes (MACs) is integrity protection and data authentication.
A standard result says that any PRF may also be used as a MAC.
The converse is in general not true: 
a good MAC is not automatically a good PRF.
Fortunately, however, essentially all standardized MAC constructions are in fact good PRFs,
including the popular HMAC~\cite{FIPS:198-1}, CMAC~\cite{NIST:SP800-38B}, GMAC~\cite{NIST:SP800-38D}, and PMAC~\cite{C:BlaRog02} schemes.

In our application, the data input of the PRF and hence of the MAC is the name $\ell(x)$ of a node $x\in X$.
For the sake of generality we did not impose any constraints on the format of these names (in particular, strings of arbitrary length are allowed).
We note that all of the MAC schemes mentioned above are designed to process arbitrary-length strings, of any format.
By consequence, all of them are suitable to securely instantiate our enforcement scheme.
However, we point out that if we imposed a constant-length restriction on $\ell(x)$, then a much simpler PRF than the MACs mentioned above can be used:
by the PRF/PRP switching lemma \cite{C:BlaRog00}, any block cipher (a.k.a.\ pseudorandom permutation, PRP) also constitutes a PRF, where the input length is equal to the output length and coincides with the cipher's block size.
In particular, if one is satisfied with using 128~bit keys and may require 128-bit labels for elements in~$X$ 
then the AES block cipher can be used without modification as the pseudorandom function of our CES construction.
Further, if the target is a security level of 256~bit and one uses 127-bit labels, then the following function would be a suitable PRF:
\[\prf\colon\set{0,1}^{256}\times\set{0,1}^{127}\to\set{0,1}^{256},\ \text{where}\ (K,s)\mapsto {\mathrm{AES}}_K(0\parallel s)\parallel{\mathrm{AES}}_K(1\parallel s) .\]

\section{Selecting a Good Tree Partition}\label{sec:tree-based-schemes}

 Each poset admits many possible tree partitions and each tree partition gives rise to many possible enforcement schemes.
 In this section, we investigate which enforcement scheme to select for a given tree partition and which tree partition to select for a given poset.
 Our analysis is based on the assumption that we wish to minimize the total number of secrets that need to be distributed to users.
 Thus, given a tree partition $\T = (X,\leqslantt)$ and a forest-based enforcement scheme $(\T,\generickaf)$, we define
 \[
  \numsecrets{\generickaf}{\T} = \sum_{x \in X} \card{\generickaf(x)} \cdot \card{U_x}.
 \]
 Note that $\card{\generickaf(x)}$ denotes the number of secrets issued to each $u \in U_x$ for the enforcement scheme $(\T,\generickaf)$.
 Thus, $\numsecrets{\generickaf}{\T}$ is the total number of secrets that need to be distributed to users when we apply scheme $(\T, \generickaf)$.
 By Lemma~\ref{lem:phi-f-is-best-enforcement-scheme}, for a given tree partition $\T = (X,\leqslantt)$, any forest-based enforcement scheme $(\T,\generickaf)$ and any $x \in X$, we have $\bestkaf{\T}(x) \subseteq \generickaf(x)$; thus
    $\card{\bestkaf{\T}(x)} \leqslant \card{\generickaf(x)}$ and \mbox{$\numsecrets{\bestkaf{\T}}{\T} \leqslant \numsecrets{\generickaf}{\T}$}.
 Hence, for a given tree partition $\T$, we will assume the use of the forest-based enforcement scheme $(\T,\bestkaf{\T})$.

 Let $\poset = (X, \leqslant)$ be an information flow policy and let $\T = (X, \leqslantt)$ be a tree partition of $\poset$.
 Then we say that $\T$ is a \emph{minimal tree partition} of $\poset$ if, for any tree partition $\T'$ of $\poset$, we have 
 $\numsecrets{\bestkaf{\T}}{\T} \leqslant \numsecrets{\bestkaf{\T'}}{\T'}$.
 (In other words, $\T$ is a tree partition that minimizes the total number of distributed secrets.)

 For any tree partition $\T = (X,\leqslantt)$ and for all $x \in X$, $x$ must have at most one parent in $(X,\leqslantt)$.
 Informally, then, to construct a tree partition $\T$ from $\poset=(X,\leqslant)$, for all $x \in X$ we must discard all but (at most) one parent of $x$ in $\poset$.
 Hence, if we can associate the choice of parent $y$ for $z$ with an appropriate cost of the edge $yz$ in $H^*=(X,\eofhstar)$, then computing a minimal tree partition can be translated into a problem of selecting a suitable weighted forest.

 We now describe how to compute such a cost function.
 Given an information flow policy $\poset=(X,\leqslant)$, for each pair $yz$ such that $y > z$, we define 
  $\gamfuncp(yz) = \set{x \in X : x \geqslant z, x \not\geqslant y}$. 
  
 \begin{proposition}\label{pro:gamma-supset-inclusion-transitivity}
  For all $x > y > z$, $\gamfuncp(xz) \supset \gamfuncp(yz)$.
 \end{proposition}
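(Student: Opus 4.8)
The plan is to establish the strict containment by proving the two inclusions separately: first the (non-strict) inclusion $\gamfuncp(yz) \subseteq \gamfuncp(xz)$, and then strictness by exhibiting an explicit element of $\gamfuncp(xz) \setminus \gamfuncp(yz)$. Recall that, by definition, $\gamfuncp(xz) = \set{w \in X : w \geqslant z,\ w \not\geqslant x}$ and $\gamfuncp(yz) = \set{w \in X : w \geqslant z,\ w \not\geqslant y}$, so each inclusion is just a matter of checking the two defining conditions.

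For the inclusion, I would take an arbitrary $w \in \gamfuncp(yz)$, so that $w \geqslant z$ and $w \not\geqslant y$, and show $w \in \gamfuncp(xz)$. The first defining condition, $w \geqslant z$, transfers immediately, so the only thing to verify is $w \not\geqslant x$. Here I would argue by contradiction: if $w \geqslant x$ then, since $x > y$ gives $x \geqslant y$, transitivity of $\leqslant$ would yield $w \geqslant y$, contradicting $w \not\geqslant y$. Hence $w \not\geqslant x$, so $w \in \gamfuncp(xz)$, which proves $\gamfuncp(yz) \subseteq \gamfuncp(xz)$.

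For strictness I would show that the intermediate element $y$ itself witnesses the difference. Since $y > z$ we have $y \geqslant z$; and since $x > y$, antisymmetry forbids $y \geqslant x$ (otherwise $x = y$, contradicting $x > y$), so $y \not\geqslant x$. These two facts give $y \in \gamfuncp(xz)$. On the other hand, reflexivity gives $y \geqslant y$, so $y$ violates the defining condition $w \not\geqslant y$ of $\gamfuncp(yz)$, whence $y \notin \gamfuncp(yz)$. Thus $y \in \gamfuncp(xz) \setminus \gamfuncp(yz)$, and the containment is strict.

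I do not anticipate any real obstacle: the entire argument rests only on transitivity, antisymmetry, and reflexivity of $\leqslant$, all immediate from the poset definition. The single point requiring a moment's care is the choice of witness for strictness — it must be the intermediate element $y$ (not $x$ or $z$), since $y$ is precisely the element that is comparable above $z$, incomparable-from-above with $x$, yet comparable with itself and therefore excluded from $\gamfuncp(yz)$.
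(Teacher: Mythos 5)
Your proof is correct and follows essentially the same route as the paper's: the inclusion via transitivity applied to an arbitrary element of $\gamfuncp(yz)$, and strictness witnessed by the intermediate element $y$ itself. The only difference is that you spell out (via antisymmetry) why $y \not\geqslant x$, a detail the paper leaves implicit.
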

 
 \begin{proof}
  Let $t \in \gamfuncp(yz)$.
  Then $t \geqslant z$ and $t \not\geqslant y$.
  Now if $t \geqslant x$, we would have $t \geqslant y$, by transitivity.
  Thus $t \not\geqslant x$ and hence $t \in \gamfuncp(xz)$.
  Moreover, $y \in \gamfuncp(xz)$, since $y > z$ and $y \not\geqslant x$, and $y \not\in \gamfuncp(yz)$, so the inclusion is strict.
 \end{proof}


  

Define a weight function $\omega_{\poset} : X \times X \rightarrow \mathbb{N}$, where 
\[
\omega_{\poset}(yz) =%
\begin{cases}
 \sum_{x \in \gamma_{\poset}(yz)} \card{U_x} & \text{if $y > z$}, \\
 0 & \text{otherwise}.
 \end{cases}
\]
 
 Note that for any tree partition $\T$, $z$ has at most one parent in $\T$, so we may write $\gamfunc{\T}(z)$ for $\gamfunc{\poset}(\parent{z}{\T}z)$ without ambiguity.
 Given a tree partition $\T$ of $X$, we define the weight function \mbox{$\Omega_{\T} : X \rightarrow \mathbb{N}$}, where
 \[
  \Omega_{\T}(z) =%
   \begin{cases}
    \displaystyle \sum_{x \geqslant z} \card{U_x} & \text{if $z$ is maximal in $\T$}, \\
    \displaystyle \sum_{x \in \gamma_{\T}(z)} \card{U_x} & \text{otherwise}.
   \end{cases}
 \]

Informally, $\Omega_{\T}(z)$ represents the number of users that will require the secret associated with $z$, on the one hand if $z$ is maximal in $\T$ and on the other if edge $\parentt{z}z$ is used in $\T$.
We can now prove the main result of this section, which establishes a relationship between \mbox{$\numsecrets{\bestkaf{\T}}{\T}$} and $\Omega_{\T}$, and thus enables us to define an (efficient) algorithm for computing a minimal tree partition.

 \newcommand{\mintree}{\widehat{\T}}
 
 \begin{theorem}\label{thm:building-optimal-tree-partition}
  Let $\poset = (X,\leqslant)$ be a poset with Hasse diagram $H(\poset) = (X,E_{\min})$ and let  $\T$ be a tree partition $\T$ of $\poset$.
  Then 
\begin{equation}\label{eq:totalsec}
   \numsecrets{\bestkaf{\T}}{\T} = \sum_{z \in X} \Omega_{\T}(z).
\end{equation}
Moreover, we can compute a minimal tree partition $\mintree$ of $\poset$ in time $O(\card{E_{\min}} + \card{X}^2)$.
 \end{theorem}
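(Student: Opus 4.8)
The plan is to prove the counting identity \eqref{eq:totalsec} by exchanging the order of summation, and then to read off a minimisation algorithm from the fact that each summand $\Omega_{\T}(z)$ depends only on the parent assigned to the single node $z$.

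For the identity, I would start from the definition $\numsecrets{\bestkaf{\T}}{\T} = \sum_{x\in X}\card{\bestkaf{\T}(x)}\card{U_x}$, rewrite it as $\sum_{x\in X}\sum_{z\in\bestkaf{\T}(x)}\card{U_x}$, and then swap the summations to obtain $\sum_{z\in X}\big(\sum_{x:\,z\in\bestkaf{\T}(x)}\card{U_x}\big)$. It then remains to show that the inner sum equals $\Omega_{\T}(z)$, which is exactly where Lemma~\ref{lem:phi-x-equals-gamma-y-z} does the work. If $z$ is maximal in $\T$, conditions (i) and (iii) give $z\in\bestkaf{\T}(x)\iff x\geqslant z$, so the inner sum is $\sum_{x\geqslant z}\card{U_x}=\Omega_{\T}(z)$. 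If $z$ has a parent $\parentt{z}$, conditions (i) and (ii) give $z\in\bestkaf{\T}(x)\iff x\geqslant z$ and $x\not\geqslant\parentt{z}$ (the case $x=z$ being automatic because $\parentt{z}>z$), i.e. $x\in\gamfuncp(\parentt{z}z)=\gamfunc{\T}(z)$, so the inner sum is $\Omega_{\T}(z)$ again. Summing over $z$ yields \eqref{eq:totalsec}.

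For the algorithmic part, the key structural observation I would establish is that choosing a tree partition is the same as independently choosing, for each $z$, either to make $z$ a root or to pick one of its covers $\{y:z\lessdot y\}$ as its parent: any such selection is a spanning subgraph of $H(\poset)$ with all in-degrees at most $1$, hence acyclic and an out-forest, and its transitive closure is a poset whose Hasse diagram is this subgraph; conversely every tree partition arises this way. Since $\Omega_{\T}(z)$ depends only on the choice made at $z$, the feasible region is a product of per-node choices, so $\min_{\T}\sum_z\Omega_{\T}(z)=\sum_z\min(\text{choices at }z)$. To make these minima cheap to evaluate I would introduce $\nu(z):=\sum_{x\geqslant z}\card{U_x}$; making $z$ a root costs $\nu(z)$, and choosing a cover $y$ as parent costs $\weightfuncp(yz)=\nu(z)-\nu(y)$, using that $y>z$ forces $\{x:x\geqslant z,\,x\geqslant y\}=\{x:x\geqslant y\}$. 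As $\nu(y)\geqslant 0$, a parent is never worse than a root, so the minimum at $z$ is attained by making $z$ a root when $z$ is maximal in $\poset$ and otherwise by selecting the cover $y$ with largest $\nu(y)$; this defines $\mintree$.

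For the running time, I would first invoke the $O(\card{X}^2)$ preprocessing of Proposition~\ref{pro:computation-time-for-phi} to obtain $O(1)$ comparison queries, then compute every $\nu(z)$ in $O(\card{X}^2)$ total by summing $\card{U_x}$ over the $x$ with $x\geqslant z$, and finally sweep the edges of $H(\poset)$ once to pick, for each $z$, the in-neighbour of largest $\nu$-value, in time $O(\card{E_{\min}})$; this gives the claimed $O(\card{E_{\min}}+\card{X}^2)$ bound. The main obstacle I anticipate is not any single calculation but the structural claim that the optimisation genuinely decouples into independent per-node subproblems — i.e. that no global acyclicity or connectivity constraint links the parent choices — which must be argued carefully through the out-forest/transitive-closure correspondence above; the identity $\weightfuncp(yz)=\nu(z)-\nu(y)$ is the secondary ingredient, since it is what reduces each weight evaluation to $O(1)$ and thereby secures the time bound.
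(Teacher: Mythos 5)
Your proof is correct, and it follows the paper's overall route --- Lemma~\ref{lem:phi-x-equals-gamma-y-z} for the counting identity, then an independent per-node greedy choice of parent among covers --- but with two worthwhile differences in execution. For the identity, you swap the order of summation and compute, for each $z$, the fiber $\set{x : z \in \bestkaf{\T}(x)}$ directly from Lemma~\ref{lem:phi-x-equals-gamma-y-z}; the paper instead expands $\card{\bestkaf{\T}(x)}$ into three terms and rearranges, which is the same computation organized less transparently. For the algorithm, the paper justifies restricting to covers via the monotonicity of Proposition~\ref{pro:gamma-supset-inclusion-transitivity} and then asserts that all edge weights $\weightfuncp(yz)$ can be computed in $O(\card{X}^2)$ ``using an algorithm similar to that used for computing $\bestkaf{\T}$'' --- a claim that, taken naively (enumerate $x$ for each Hasse edge $yz$), only yields $O(\card{E_{\min}}\cdot\card{X})$. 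Your telescoping identity $\weightfuncp(yz) = \nu(z) - \nu(y)$ with $\nu(z) = \sum_{x \geqslant z}\card{U_x}$ is the refinement that actually secures the stated $O(\card{E_{\min}} + \card{X}^2)$ bound: the $\nu$-values cost $O(\card{X}^2)$ once, after which each edge weight is $O(1)$, and the monotonicity of Proposition~\ref{pro:gamma-supset-inclusion-transitivity} falls out for free since $x > y$ implies $\nu(x) \leqslant \nu(y)$. You are also right to flag the decoupling step as the one requiring care, and your argument for it is sound: since every edge of $H(\T)$ must be a cover edge of $\poset$, no selected edge can be transitively implied by a path of other selected edges (a path of length at least two from $y$ to $z$ would contradict $z \lessdot y$), so in-degree-at-most-one subsets of $E_{\min}$ correspond exactly to tree partitions and the objective $\sum_z \Omega_{\T}(z)$ splits into independent per-node minima --- a point the paper's proof treats only implicitly.
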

 
 \begin{proof}
  We first prove (\ref{eq:totalsec}).  Let $X''$ denote the set of maximal elements in $\T$ and $X'$ denote the set of non-maximal elements.
  By definition,
   \[
     \numsecrets{\bestkaf{\T}}{\T} = \sum_{x\in X} |\bestkaf{\T}(x)| |U_x|
   \]
   and, by Lemma~\ref{lem:phi-x-equals-gamma-y-z}, we have
   \[
    \card{\bestkaf{\T}(x)} = \card{\set{z \in X'\setminus \{x\}:\ x \in \gamfunc{\T}(z)}} + \card{\set{z \in X'' :\ x > z}} + 1.
   \]
   Hence
    \begin{align*}
	\numsecrets{\bestkaf{\T}}{\T} &= \sum_{x \in X} (\card{\set{z \in X': x \in \gamfunc{\T}(z)}} + \card{\set{z \in X'' : x > z}} + 1) |U_x| \\
					&= \sum_{x \in X} \card{\set{z \in X' : x \in \gamfunc{\T}(z)}} \card{U_x} - \sum_{x \in X'}|U_x| + \sum_{x \in X} \card{\set{z \in X'' : x > z}} \card{U_x} + \sum_{x \in X} \card{U_x} \\
					&= \sum_{x \in X} \card{\set{z \in X' : x \in \gamfunc{\T}(z)}} \card{U_x} + \sum_{x \in X} \card{\set{z \in X'' : x > z}} \card{U_x} + \sum_{x \in X''} \card{U_x} \\
	&=\sum_{z\in X'} \sum_{x\in \gamfunc{\T}(z)} \card{U_x} + \sum_{z\in X''} \sum_{x \geqslant z} \card{U_x} \\
	&=\sum_{z\in X} \Omega_{\T}(z) 
   \end{align*}

  We next establish the choice of $\T$ that minimizes $\numsecrets{\bestkaf{\T}}{\T}$. 
  Observe that if $z$ is not a maximal element of $X$, a minimal tree partition $\mintree$ will not have $z$ as a maximal element either. 
  Indeed, suppose $z$ is a maximal element in a tree partition $\T$ and let $y$ be a parent of $z$ in $X$. 
  Then $\Omega_{\T}(z)>\Omega_{{\T}'}(z)$, where ${{T}'}$ is obtained from $\T$ by adding edge $yz$ to the Hasse diagram of $\T$, since $\{x\in X:\ x\in \gamfunc{{\T}'}(z)\}\subset \{x\in X:\ x\geqslant z\}$; the inclusion is strict since $y$ is in the first set but not the second. 
  Thus, $z$ is a maximal in $\mintree$ if and only if $z$ is maximal in $X$. 
  It remains to decide on parents in $\mintree$ for non-maximal elements in $X$.

  Let $\T$ be a tree partition and $z$ is not maximal in $\T$. Note that $\Omega_{\T}(z)=\omega_{\poset}(\parent{z}{\T}z)$. 
  By Proposition~\ref{pro:gamma-supset-inclusion-transitivity}, we have $\gamfuncp(yz) \subset \gamfuncp(xz)$ for $x > y > z$.
  It follows that $\omega_{\poset}(yz) \leqslant \omega_{\poset}(xz)$, the inequality being strict if we assume that at least one user is assigned to each node in $X$.
  Thus it suffices to consider only parents of $z$ in $X$ when constructing a minimum tree partition.
  Moreover, to build $\mintree$, for each non-maximal $z \in X$, we select a parent $y$ of $z$ in $X$ such that $\omega_{\poset}(yz) \leqslant \omega_{\poset}(y'z)$ for all other parents $y'$ of $z$.
  
  Finally, we analyze the running time to compute a minimum tree partition.
  We can compute $\omega_{\poset}(yz)$ for each non-maximal $z$ and each parent $y$ of $z$ in $\poset$ in time $O(\card{X}^2)$ using an algorithm similar to that used for computing $\bestkaf{\T}$.
Now a minimal tree partition $\T$ of $\poset$ can be obtained by setting 
$y = \parentt{z}$, where $\weightfuncp(yz) \leqslant \weightfuncp(xz)$ for all $x \in X$ such that $x > z$.
This will require time $O(\card{E_{\min}})$. Thus, the total time required is $O(\card{E_{\min}}+|X|^2)$.\footnote{Since $\card{E_{\min}} \leqslant |X|^2$ we can simplify the total time to $O(\card{X}^2)$. However, we decided to keep $\card{E_{\min}}$ to stress that only parents of elements need to be considered to compute a minimum tree partition.}
 \end{proof}

We have shown that we can compute a minimal tree partition efficiently.
Recall that $\card{\bestkaf{\T}(x)}$ measures the number of secrets a user in $U_x$ will require to derive all authorized secrets (and keys).
We now consider whether it is possible to compute a minimal tree partition that simultaneously bounds $\max_{x \in X}\set{\card{\bestkaf{\T}(x)}}$.
Let $\T$ be a minimal tree partition of $\poset = (X, \leqslant)$.
We will say that $\T$ is an \emph{optimal tree partition} of $\poset$ if $\T$ has the minimum number of minimal elements among all minimal tree partitions.
An optimal tree partition with $\ell$ leaves has the property that no user will require more than $\ell$ secrets.

For each non-maximal $z \in \poset = (X,\leqslant)$, let $Y(z)$ be the set of $y\in X$ such that $y>z$ and $\weightfuncp(yz)$ is minimum. 
Construct a directed acyclic graph $H$ with vertex set $X$; for every non-maximal $y\in X$, the in-neighborhood of $y$ is $Y(y)$, and each maximal $y\in X$ has no in-neighbors. Add to $H$ a new vertex $r$ which is an in-neighbor of every $x\in X$.
Now apply the polynomial-time algorithm {\sc MinLeaf} \cite{GuRaKi09}, that allows us to find an out-tree rooted at $r$ with minimum number of leaves, i.e., vertices with no out-neighbors. As a result, we obtain, among all tree partitions with minimum number of secrets, one with minimum number of minimal elements. 
Let $X'$ denote the set of non-maximal elements in $\poset$. Then {\sc MinLeaf}'s runtime is $O(s+|X|^{3/2}s^{1/2})$, where $s=\sum_{z\in X'}|Y(z)|$. Observe that $s\leqslant \card{\eofhstar}$ and $\card{\eofhstar} \leqslant |X|^2$.   This implies that $O(s+|X|^{3/2}s^{1/2})=O(\card{X}^{3/2}\card{\eofhstar}^{1/2}).$
Thus, we have the following result.

 \begin{corollary}\label{cor:minimizing-leaves}
 Given an information flow policy $\poset = (X, \leqslant)$, we can find an optimal tree partition  $\T = (X, \leqslantt)$ of $\poset$ in time $O(\card{X}^{3/2}\card{\eofhstar}^{1/2})$.    
 \end{corollary}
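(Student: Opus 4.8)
The plan is to reduce the construction of an optimal tree partition to a single invocation of the minimum-leaf spanning out-tree algorithm {\sc MinLeaf} of \cite{GuRaKi09}, using the characterisation of minimal tree partitions already obtained in Theorem~\ref{thm:building-optimal-tree-partition}. That theorem tells us that a tree partition is minimal (i.e.\ minimises the total number $\numsecrets{\bestkaf{\T}}{\T}$ of distributed secrets) precisely when every maximal element of $\poset$ stays maximal and every non-maximal $z$ is assigned a parent $y$ for which $\weightfuncp(yz)$ is minimum among all $y>z$. Writing $Y(z)$ for this set of minimum-weight candidates, the minimal tree partitions are exactly the out-forests obtained by choosing, independently for each non-maximal $z$, one element of $Y(z)$ as its parent. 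Since all such choices yield the same (minimum) secret count, the remaining freedom can be spent on the secondary objective of minimising the number of minimal elements, which by the preceding discussion upper-bounds the per-user secret count.

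First I would record that each $y\in Y(z)$ is in fact a cover of $z$: by Proposition~\ref{pro:gamma-supset-inclusion-transitivity}, together with the strictness coming from the assumption that every node carries at least one user, $\weightfuncp$ is strictly increasing along chains, so no proper ancestor of a cover can attain the minimum. Hence the edges $yz$ with $y\in Y(z)$ are genuine Hasse edges of $\poset$, and every choice of parents produces a legitimate tree partition.

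The heart of the argument is the reduction itself. Build the digraph $H$ on vertex set $X$ in which the in-neighbourhood of each non-maximal $z$ is $Y(z)$ and each maximal element has no in-neighbour, then adjoin a root $r$ that is an in-neighbour of every vertex of $X$. A spanning out-tree of $H$ rooted at $r$ selects exactly one in-neighbour as the parent of each $x\in X$; deleting $r$ leaves a forest whose minimal elements are precisely the leaves of the out-tree (note that $r$ itself is never a leaf, so the leaf counts agree). The correctness claim is that a minimum-leaf out-tree can be taken to attach every non-maximal $z$ to a real parent in $Y(z)$ rather than to $r$: if some non-maximal $z$ were a child of $r$, then, because all tree edges run from larger to smaller poset elements, no element of $Y(z)$ lies in the subtree below $z$, so $z$ may be re-attached under any $y\in Y(z)$ without creating a cycle; this turns $y$ into a non-leaf (if it was one) and merely removes a child from $r$ (which cannot create a leaf), so the number of leaves does not increase. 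After this normalisation the out-tree restricted to $X$ is a minimal tree partition whose minimal elements are its leaves, and since {\sc MinLeaf} has minimised the leaf count we obtain, among all minimal tree partitions, one with the fewest minimal elements, i.e.\ an optimal tree partition.

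Finally I would account for the running time. The weights $\weightfuncp$, and hence the sets $Y(z)$ and the digraph $H$, can be computed in $O(\card{X}^2)$ time by the procedure used in Theorem~\ref{thm:building-optimal-tree-partition}, while the size of $H$ is $s=\sum_{z}\card{Y(z)}\le\card{\eofhstar}\le\card{X}^2$. Invoking {\sc MinLeaf} costs $O(s+\card{X}^{3/2}s^{1/2})$, which, substituting $s\le\card{\eofhstar}$, collapses to $O(\card{X}^{3/2}\card{\eofhstar}^{1/2})$ and dominates the preprocessing. I expect the main obstacle to be the correctness of the reduction rather than the timing: one must argue carefully that minimising leaves in $H$ coincides with minimising minimal elements over the \emph{minimal} tree partitions, and this is exactly what the re-attachment normalisation above is designed to guarantee.
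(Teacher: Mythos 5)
Your proposal is correct and follows essentially the same route as the paper: reduce to the minimum-leaf out-tree problem on the digraph $H$ built from the minimum-weight parent sets $Y(z)$ with an added root $r$, invoke {\sc MinLeaf} of~\cite{GuRaKi09}, and bound the runtime by $s\leqslant\card{\eofhstar}$. Your re-attachment normalisation (showing a minimum-leaf out-tree never needs to hang a non-maximal element directly off $r$) is a detail the paper leaves implicit, and it is argued correctly.
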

 
 We conclude this section with an example illustrating our results.
  Let $[n] = \set{1,2,\dots,n}$ and let $[i,j] = \set{i,i+1,\dots,j-1,j}$ for $i \leq j$. 
 Then define the poset \[ \intn(n) = \set{[i,j] : 1 \leqslant i \leqslant j \leqslant n}, \] where $[i,j] \leqslant [i',j']$ if and only if $i' \leqslant i$ and $j' \geqslant j$.
 The Hasse diagram for $\intn(5)$ is illustrated in Figure~\ref{fig:hasse-diagram-poset}.
 The poset $\intn(n)$ has attracted considerable interest because of its application to ``time-bound'' access control (see~\cite{AtBlFr07,Cr11},  for example).
 In particular, the numbers $1,\dots,n$ represent time points or time intervals, and elements in $\intn(n)$ represent contiguous intervals of time (either consecutive points or a sequence of consecutive intervals).
 A user $u$ assigned the interval $[i,j]$ is authorized to access any object assigned an interval $[i',j'] \subseteq [i,j]$. 
 
 The cardinality of $\gamma_{\poset}(yz)$, $y,z \in \intn(5)$, $y \gtrdot z$, is shown in Figure~\ref{fig:intervals-5-edge-weights}.
 A tree of minimum weight is shown in Figure~\ref{fig:interval-5-tree-minimum-weight} and the corresponding values of $\Omega_{\T}(z)$ are shown in Figure~\ref{fig:interval-5-gamma-z}.
 It is possible to show that the minimum number of secrets required in total, assuming $\card{U_x} = 1$ for each $x \in \intn(n)$, is $\frac{1}{6}m(m+1)(4m-1)$ if $n = 2m-1$, and $\frac{1}{6}m(m+1)(4m+5)$ if $n = 2m$.

 \begin{figure}[h]\centering
  \begin{subfigure}{.475\textwidth}\centering
   \newcommand{\addwt}[3]{\draw (#1) to node[wt] {$#2$} (#3)}
   \begin{tikzpicture}[vtx/.style={circle,draw,fill,inner sep=1pt},wt/.style={fill=white,font=\footnotesize,inner sep=2pt},x=1cm,y=.5cm,scale=.75]
    \node[vtx] (11) at (0,0) {};
    \node[vtx] (22) at (2,0) {};
    \node[vtx] (33) at (4,0) {};
    \node[vtx] (44) at (6,0) {};
    \node[vtx] (55) at (8,0) {};
    \node[vtx] (12) at (1,2) {};
    \node[vtx] (23) at (3,2) {};
    \node[vtx] (34) at (5,2) {};
    \node[vtx] (45) at (7,2) {};
    \node[vtx] (123) at (2,4) {};
    \node[vtx] (234) at (4,4) {};
    \node[vtx] (345) at (6,4) {};
    \node[vtx] (1234) at (3,6) {};
    \node[vtx] (2345) at (5,6) {};
    \node[vtx] (12345) at (4,8) {};
    \draw (12345) to node[wt] {$1$}  (1234);
    \draw (12345) to node[wt] {$1$}  (2345);
    \draw (1234) to node[wt] {$1$}  (123);
    \draw (1234) to node[wt] {$2$}  (234);
    \draw (2345) to node[wt] {$2$}  (234);
    \draw (2345) to node[wt] {$1$}  (345);
    \draw (123)  to node[wt] {$1$}  (12);
    \draw (123)  to node[wt] {$3$} (23);
    \draw (234)  to node[wt] {$2$} (23);
    \draw (234)  to node[wt] {$2$} (34);
    \draw (345)  to node[wt] {$3$} (34);
    \draw (345)  to node[wt] {$1$} (45);
    \draw (12)  to node[wt] {$1$} (11);
    \draw (12)  to node[wt] {$4$} (22);
    \draw (23)  to node[wt] {$2$} (22);
    \draw (23)  to node[wt] {$3$} (33);
    \draw (34)  to node[wt] {$3$} (33);
    \draw (34)  to node[wt] {$2$} (44);
    \draw (45)  to node[wt] {$4$} (44);
    \draw (45)  to node[wt] {$1$} (55);
    \end{tikzpicture}
    \caption{$\card{\gamma(yz)}$}\label{fig:intervals-5-edge-weights}
   \end{subfigure}
  
   \vspace*{\baselineskip}
   
  \begin{subfigure}{.475\textwidth}\centering
   \newcommand{\addwt}[3]{\draw (#1) to node[wt] {$#2$} (#3)}
   \begin{tikzpicture}[vtx/.style={circle,draw,fill,inner sep=1pt},wt/.style={fill=white,font=\footnotesize,inner sep=2pt},x=1cm,y=.5cm,scale=.75]
    \node[vtx] (11) at (0,0) {};
    \node[vtx] (22) at (2,0) {};
    \node[vtx] (33) at (4,0) {};
    \node[vtx] (44) at (6,0) {};
    \node[vtx] (55) at (8,0) {};
    \node[vtx] (12) at (1,2) {};
    \node[vtx] (23) at (3,2) {};
    \node[vtx] (34) at (5,2) {};
    \node[vtx] (45) at (7,2) {};
    \node[vtx] (123) at (2,4) {};
    \node[vtx] (234) at (4,4) {};
    \node[vtx] (345) at (6,4) {};
    \node[vtx] (1234) at (3,6) {};
    \node[vtx] (2345) at (5,6) {};
    \node[vtx] (12345) at (4,8) {};
    \draw (12345) to node[wt] {$1$}  (1234);
    \draw (12345) to node[wt] {$1$}  (2345);
    \draw (1234) to node[wt] {$1$}  (123);
    \draw (1234) to node[wt] {$2$}  (234);
    \draw (2345) to node[wt] {$1$}  (345);
    \draw (123)  to node[wt] {$1$}  (12);
    \draw (234)  to node[wt] {$2$} (23);
    \draw (234)  to node[wt] {$2$} (34);
    \draw (345)  to node[wt] {$1$} (45);
    \draw (12)  to node[wt] {$1$} (11);
    \draw (23)  to node[wt] {$2$} (22);
    \draw (23)  to node[wt] {$3$} (33);
    \draw (34)  to node[wt] {$2$} (44);
    \draw (45)  to node[wt] {$1$} (55);
    \end{tikzpicture}
    \caption{$\mintree$}\label{fig:interval-5-tree-minimum-weight}
   \end{subfigure}
  \hfill
  \begin{subfigure}{.475\textwidth}\centering
   \newcommand{\addwt}[3]{\draw (#1) to node[wt] {$#2$} (#3)}
   \begin{tikzpicture}[vtx/.style={circle,draw,inner sep=1pt,font=\footnotesize},wt/.style={fill=white,font=\footnotesize,inner sep=2pt},x=1cm,y=.5cm,scale=.75]
    \node[vtx] (11) at (0,0) {$1$};
    \node[vtx] (22) at (2,0) {$2$};
    \node[vtx] (33) at (4,0) {$3$};
    \node[vtx] (44) at (6,0) {$2$};
    \node[vtx] (55) at (8,0) {$1$};
    \node[vtx] (12) at (1,2) {$1$};
    \node[vtx] (23) at (3,2) {$2$};
    \node[vtx] (34) at (5,2) {$2$};
    \node[vtx] (45) at (7,2) {$1$};
    \node[vtx] (123) at (2,4) {$1$};
    \node[vtx] (234) at (4,4) {$2$};
    \node[vtx] (345) at (6,4) {$1$};
    \node[vtx] (1234) at (3,6) {$1$};
    \node[vtx] (2345) at (5,6) {$1$};
    \node[vtx] (12345) at (4,8) {$1$};
    \draw (12345) -- (1234);
    \draw (12345) -- (2345);
    \draw (1234) -- (123);
    \draw (1234) -- (234);
    \draw (2345) -- (345);
    \draw (123) -- (12);
    \draw (234) -- (23);
    \draw (234) -- (34);
    \draw (345) -- (45);
    \draw (12) -- (11);
    \draw (23) -- (22);
    \draw (23) -- (33);
    \draw (34) -- (44);
    \draw (45) -- (55);
    \end{tikzpicture}
    \caption{$\Omega_{\mintree}(z)$}\label{fig:interval-5-gamma-z}
   \end{subfigure}
  \caption{A minimal tree partition of $(\intn(5),\subseteq)$}\label{fig:tree-partition-interval-5}
 \end{figure}

\section{Selecting a Good Chain Partition}\label{sec:chain-based-schemes}

\newcommand{\treepi}{\widetilde{\Pi}}
\newcommand{\eofpi}{E_{0,\Pi}}
\newcommand{\eoftreepi}{E_{0,\treepi}}

\newcommand{\eofg}{E_0^*}
\newcommand{\eofc}{E_C}

\newcommand{\vin}[1][x]{#1_{\rm in}}
\newcommand{\vout}[1][x]{#1_{\rm out}}

In this section, we consider chain-based schemes. Recall that a chain partition of a poset $\poset$ is a disjoint union of chains such that every element
of $\poset$ belongs to one of the chains. An element $z$ of a chain $C$ is called \emph{top} (\emph{bottom}, respectively) if the in-degree (out-degree, respectively) of $z$ in $H(C)$ is zero.

 


We first show that the number of secrets  to be issued in a chain-based enforcement scheme is determined by the bottom elements of the chains in the corresponding chain partition.
This in turn implies that there exists a chain partition with a minimum number of secrets issued for which the number of chains is exactly the width of the poset.


  

\begin{lemma}\label{lem:number-of-keys-from-bottom-elements}
For any poset $\poset=(X, \leqslant)$ and any chain partition $\C = (X, \leqslantc)$ of $(X, \leqslant)$ with chains $\set{C_1,\dots,C_\ell}$, let chain $C_i$ have bottom element $b_i$, $1 \leqslant i \leqslant \ell$.
 Then 
 \begin{equation}\label{eq:chainsec}
  \numsecrets{\bestkaf{\C}}{\C}  = \sum_{i=1}^\ell \sum_{x \in \uset{b_i}{\poset}}\card{U_x}.
\end{equation}
\end{lemma}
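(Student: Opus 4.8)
The plan is to obtain the claim as a direct consequence of Theorem~\ref{thm:building-optimal-tree-partition}. Since every chain is (trivially) a tree, a chain partition $\C = (X,\leqslantc)$ is in particular a tree partition of $\poset$, so that theorem applies and gives
\[
 \numsecrets{\bestkaf{\C}}{\C} = \sum_{z \in X} \Omega_{\C}(z).
\]
It therefore suffices to reorganize the right-hand side so that, within each chain $C_i$, the contributions collapse to the single term $\sum_{x \in \uset{b_i}{\poset}} \card{U_x}$ attached to its bottom element. The mechanism I expect to drive this is a telescoping sum along each chain.

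To set this up, I would write $W(z) = \sum_{x \in \uset{z}{\poset}} \card{U_x}$ for the number of users authorized for (at least) label $z$, and re-express $\Omega_{\C}$ through $W$. For a maximal element $z$ of $\C$, the definition gives $\Omega_{\C}(z) = \sum_{x \geqslant z} \card{U_x} = W(z)$ immediately. For a non-maximal $z$, writing $p = \parent{z}{\C}$, we have $\Omega_{\C}(z) = \sum_{x \in \gamfuncp(pz)} \card{U_x}$; and since $\gamfuncp(pz) = \set{x \in X : x \geqslant z,\ x \not\geqslant p} = \uset{z}{\poset} \setminus \uset{p}{\poset}$ with $\uset{p}{\poset} \subseteq \uset{z}{\poset}$ (because $p > z$), this equals $W(z) - W(p)$.

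Now I would fix a chain $C_i$ and evaluate $\sum_{z \in C_i} \Omega_{\C}(z)$ by listing the elements of $C_i$ top-down as $c_0 \gtrdotc c_1 \gtrdotc \dots \gtrdotc c_k$, so that $c_0$ is the (maximal) top element, $c_k = b_i$ is the bottom element, and $c_{j-1} = \parent{c_j}{\C}$ for $1 \leqslant j \leqslant k$. Summing the two cases above over the chain gives
\[
 \sum_{z \in C_i} \Omega_{\C}(z) = W(c_0) + \sum_{j=1}^{k} \bigl( W(c_j) - W(c_{j-1}) \bigr) = W(c_k) = \sum_{x \in \uset{b_i}{\poset}} \card{U_x},
\]
the middle equality being the telescoping collapse. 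Summing over the $\ell$ chains, which partition $X$, yields $\sum_{z \in X} \Omega_{\C}(z) = \sum_{i=1}^{\ell} \sum_{x \in \uset{b_i}{\poset}} \card{U_x}$, and combining this with the first display gives exactly~\eqref{eq:chainsec}.

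The only genuinely delicate point, and the step I would verify most carefully, is the identity $\gamfuncp(pz) = \uset{z}{\poset} \setminus \uset{p}{\poset}$: this relies on $p > z$ forcing $\uset{p}{\poset} \subseteq \uset{z}{\poset}$, so that $\Omega_{\C}(z)$ really is a clean difference $W(z) - W(p)$ with nothing double-counted; this is precisely where the chain (parent) structure is used. Everything else is bookkeeping, and the telescoping is routine once $\Omega_{\C}$ has been rewritten in terms of $W$.
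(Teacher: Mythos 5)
Your proof is correct and takes essentially the same route as the paper's: both first invoke Theorem~\ref{thm:building-optimal-tree-partition} (a chain partition being a special case of a tree partition) to get $\numsecrets{\bestkaf{\C}}{\C} = \sum_{z\in X}\Omega_{\C}(z)$, and then collapse that sum chain by chain. Your telescoping identity $\Omega_{\C}(z) = W(z) - W(\parent{z}{\C})$ is simply a re-packaging of the paper's decomposition of $\uset{b_i}{\poset}$ into the disjoint sets $\uset{z_1}{\poset}$ and $\gamfunc{\C}(z_j)$, $2 \leqslant j \leqslant c$, so the two arguments are the same computation presented in different notation.
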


\begin{proof}
Let $C_i$ comprise elements $z_1,z_2, \dots ,z_c$ such that $z_1>z_2>\dots >z_c$ (i.e., $b_i=z_c$) and observe that 
$\uset{b_i}{\poset}$ is the disjoint union of sets $X_i$, $1 \leqslant i \leqslant c$, where $X_1=\{x:\ x\geqslant  z_1\}$ and $X_j=\{x:\ x \not\geqslant  z_{j-1}, x \geqslant  z_j\},$  $2 \leqslant j \leqslant c$. Observe that $X_j=\{x:\ x\in \gamma_{\C}(z)\}$, $2 \leqslant j \leqslant c$. This decomposition of $\uset{b_i}{\poset}$ into sets $X_i$, $1 \leqslant i \leqslant c$, will be used in the following derivation. 

By (\ref{eq:totalsec}) and the definition of $\Omega_{\C}(z)$, 
\begin{align*} 
  \numsecrets{\bestkaf{\C}}{\C} &= \sum_{z\in X} \Omega_{\C}(z)\\
                                                  &= \sum_{i = 1}^{\ell}\sum_{x\in X_1}|U_x| + \sum_{i = 1}^{\ell} \sum_{j=2}^{\ell} \sum_{x\in X_j}|U_x| \\
                                                  &= \sum_{i = 1}^{\ell} \sum_{x \in \uset{b_i}{\poset}}\card{U_x}\qedhere
\end{align*}
\end{proof}


By Dilworth's Theorem, a poset $(X,\leqslant)$ of width $w$ of has a chain partition with $w$ chains. Such a chain partition can be obtained in time $O(|X|^{2.5})$ \cite{Garg15}. Thus, in particular, we can compute $w$ in time $O(|X|^{2.5})$.
The next theorem can be viewed as a strengthening of Dilworth's Theorem. In Subsection \ref{sec:chain-partition-requiring-widehat-k-keys}, we will show how to compute a minimal chain partition of width $w$
in polynomial time.
  
\begin{theorem}\label{thm:chain-partition-only-requires-w-chains}
 Let $\poset=(X,\leqslant)$ be an information flow policy of width $w$.
 Then there exists a minimal chain partition of width $w$.
\end{theorem}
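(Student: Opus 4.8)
The lever I would use is Lemma~\ref{lem:number-of-keys-from-bottom-elements}: the quantity $\numsecrets{\bestkaf{\C}}{\C}$ depends on the chain partition $\C$ \emph{only} through its set $B = \set{b_1,\dots,b_\ell}$ of bottom elements, namely $\numsecrets{\bestkaf{\C}}{\C} = \sum_{b \in B} c(b)$, where I write $c(b) = \sum_{x \in \uset{b}{\poset}} \card{U_x} \geqslant 0$. Moreover the number of chains of $\C$ equals $\card{B} = \ell$, since each chain has exactly one bottom. Hence ``minimal chain partition of width $w$'' asks for a chain partition with exactly $w$ chains whose bottom-set minimizes $\sum_{b \in B} c(b)$. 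The plan is therefore to start from an arbitrary minimal chain partition and repeatedly transform it so that the number of chains strictly decreases while the cost never increases, stopping once $w$ chains remain.

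The mechanism for these transformations is the standard correspondence between chain partitions of $\poset$ and matchings in a bipartite ``split'' graph. I would introduce the bipartite graph on vertex classes $\set{\vout : x \in X}$ and $\set{\vin : x \in X}$ with an edge $\vout\,\vin[y]$ whenever $x > y$ (the edges of $H^*(\poset)$). A matching $M$ is read as a set of successor-links, and the matching condition is exactly vertex-disjointness of the resulting chains; conversely every chain partition arises this way. Under this bijection, $x$ is a bottom element iff $\vout$ is unmatched, so the number of chains equals $\card{X} - \card{M}$, and a chain partition has the minimum possible number $w$ of chains iff $M$ is a maximum matching. (This is precisely the matching proof of Dilworth's Theorem.)

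Now I would run the augmenting-path step. Suppose a chain partition $\C$ has more than $w$ chains; then its matching $M$ is not maximum, so there is an $M$-augmenting path $P$ running from a free left vertex $\vout[b]$ to a free right vertex $\vin[t]$. Augmenting along $P$ produces a matching $M'$ with $\card{M'} = \card{M}+1$, hence one fewer chain. The crucial bookkeeping point is the effect on the bottom-set: every left vertex matched by $M$ is still matched by $M'$, and in addition $\vout[b]$ becomes matched, so the set of unmatched left vertices shrinks by exactly $\set{\vout[b]}$ and gains nothing. Thus the new bottom-set is $B' = B \setminus \set{b}$, and by Lemma~\ref{lem:number-of-keys-from-bottom-elements} the cost changes by $-c(b) \leqslant 0$, i.e.\ does not increase. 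Starting from a minimal $\C$ and iterating this step, the number of chains decreases by one each time and the cost stays $\leqslant$ the minimum; the process halts exactly when $M$ becomes maximum, i.e.\ at $w$ chains (we cannot overshoot, as $w$ is the minimum number of chains). The resulting partition has $w$ chains and cost at most the minimum, hence equal to it, giving the required minimal chain partition of width $w$.

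The main obstacle, I expect, is not any single computation but getting the correspondence and its accounting exactly right: verifying the chain-partition/matching bijection together with the identities ``number of chains $=\card{X}-\card{M}$'' and ``bottom elements $=$ unmatched left vertices'', and above all checking that augmenting removes a single bottom element and creates none (so that a cost expressed purely as a sum over bottoms can only go down). Once that is pinned down, the augmenting-path iteration itself is routine, and it in fact shows the slightly stronger fact that \emph{every} minimal chain partition can be reduced to one of width $w$.
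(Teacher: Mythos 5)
Your proof is correct, but it takes a genuinely different route from the paper. The paper gets the reduction step for free by citing the Gallai--Milgram theorem: if a chain partition has $t > w$ chains, there is one with $t-1$ chains whose bottom elements form a subset of the original bottom set $B$; iterating and invoking Lemma~\ref{lem:number-of-keys-from-bottom-elements} (exactly as you do) shows the cost never increases. You instead \emph{prove} that reduction step from scratch via the Fulkerson-style matching correspondence: chain partitions of $\poset$ are matchings in the bipartite split graph on $\set{x_{\mathrm{out}}}\cup\set{x_{\mathrm{in}}}$ with edges the comparabilities of $H^*(\poset)$, bottoms are unmatched left vertices, and augmenting along an $M$-augmenting path matches the free left endpoint $b_{\mathrm{out}}$ while leaving every previously matched left vertex matched, so the bottom set shrinks by exactly $\set{b}$ and the cost drops by $c(b)\geqslant 0$. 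Your accounting is sound (the one point you gloss over --- that a matching's successor links cannot form a cycle --- is immediate since a strict partial order has no cycles), and your conclusion is in fact slightly stronger than the paper's citation delivers: the bottom set loses exactly one element per step, and \emph{every} minimal chain partition can be reduced to one of width $w$. What each approach buys: the paper's is shorter and defers the combinatorics to a known theorem, whereas yours is self-contained and constructive --- and it dovetails nicely with the paper's own Section~\ref{sec:chain-partition-requiring-widehat-k-keys}, where the network chain-representation is built from precisely your $x_{\mathrm{in}}/x_{\mathrm{out}}$ vertex-splitting, so your argument can be read as the matching-theoretic core of the min-cost-flow algorithm given there.
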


\begin{proof}
 Let $\C=(X,\leqslantc)$ be a minimal chain partition of $X$ into $t \geqslant w$ chains and let $B$ be the set of bottom elements in the chains of $\C$.
 A theorem of Gallai and Milgram asserts that if a chain partition $\C$ of a poset $\poset$ contains $t$ chains, where $t > w$, then there exists a chain partition $\C'=(X,\leqslant_{C'})$ into $t-1$ chains such that the set of bottom elements in  $\C'$  is a subset of $B$~\cite{GaMi60}.\footnote{The result is phrased in the language of digraphs, but every poset may be represented by an equivalent transitive acyclic digraph.}
 Hence, by iterated applications of the Gallai-Milgram theorem, there exists a chain partition $\C^*=(X,\leqslant_{C^*})$ of width $w$ such that the set of bottom elements $B^*$ in $\C^*$ is a subset of $B$.
 Moreover, by Lemma~\ref{lem:number-of-keys-from-bottom-elements},
 \[
  \numsecrets{\bestkaf{\C^*}}{\C^*} = \sum_{b \in B^*}\sum_{x \in \uset{b}{\poset}}\card{U_x}  \leqslant  \sum_{b \in B} \sum_{x \in \uset{b}{\poset}}\card{U_x} =  \numsecrets{\bestkaf{\C}}{\C}
 \]
 As $\C$ is a minimal chain partition, we conclude that $\C^*$ is also a minimal chain partition.
\end{proof}

\begin{corollary}\label{cor:users-only-require-w-secrets}
Let $\poset=(X,\leqslant)$ be an information flow policy.
There exists a chain partition $\C=(X,\leqslantc)$ such that $\numsecrets{\bestkaf{\C}}{\C}$ is minimized  
and $\max\set{\card{\bestkaf{\C}(x)} : x \in X} \leqslant w$.
\end{corollary}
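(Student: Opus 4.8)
The plan is to derive this corollary from Theorem~\ref{thm:chain-partition-only-requires-w-chains} by showing that, in \emph{any} chain partition, no user ever needs more secrets than there are chains. Concretely, Theorem~\ref{thm:chain-partition-only-requires-w-chains} already hands us a minimal chain partition $\C = (X, \leqslantc)$ consisting of exactly $w$ chains, so this partition minimizes $\numsecrets{\bestkaf{\C}}{\C}$ by construction. It therefore remains only to prove the general bound that a chain partition into $\ell$ chains satisfies $\card{\bestkaf{\C}(x)} \leqslant \ell$ for every $x \in X$, and then to instantiate this bound with $\ell = w$.

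First I would fix $x \in X$ and a single chain $C_i$ of $\C$, and show that $C_i$ contributes at most one element to $\bestkaf{\C}(x) = \set{\bridge{xz} : x \geqslant z}$. For $z \in C_i$ with $z \leqslant x$, the anchor $\bridge{xz}$ is by definition the maximum of $\dset{x}{\poset} \cap \uset{z}{\C}$; since $\uset{z}{\C}$ is exactly the portion of the chain $C_i$ lying at or above $z$, this intersection is the set of members of $C_i$ that are $\geqslant z$ in $\C$ and $\leqslant x$ in $\poset$, and its maximum is simply the topmost element of $C_i$ that is still $\leqslant x$. The crucial observation is that this topmost element is independent of $z$: for any two $z, z' \in C_i$ with $z, z' \leqslant x$, both $\bridge{xz}$ and $\bridge{xz'}$ equal the largest element of $C_i \cap \dset{x}{\poset}$. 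Hence all elements of $C_i$ below $x$ share a single anchor, so $C_i$ contributes exactly one element to $\bestkaf{\C}(x)$ when $C_i \cap \dset{x}{\poset} \neq \emptyset$ and none otherwise. (Alternatively, this follows directly from Lemma~\ref{lem:phi-x-equals-gamma-y-z}: if $z \lessc z'$ both lay in $\bestkaf{\C}(x)$, then $z$ has a parent with $\parent{z}{\C} \leqslantc z' \leqslant x$, so $x \geqslant \parent{z}{\C}$ and case~(ii) fails for $z$, a contradiction.)

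Summing over the chains then gives $\card{\bestkaf{\C}(x)} \leqslant \ell$. Applying this to the width-$w$ partition supplied by Theorem~\ref{thm:chain-partition-only-requires-w-chains} yields a single chain partition $\C$ that simultaneously minimizes $\numsecrets{\bestkaf{\C}}{\C}$ and satisfies $\max\set{\card{\bestkaf{\C}(x)} : x \in X} \leqslant w$, which is exactly the claim. The only real obstacle is the per-chain bound; once one notices that every element of a chain that lies below $x$ yields the same anchor --- namely the top of that chain's intersection with $\dset{x}{\poset}$ --- the remainder is immediate.
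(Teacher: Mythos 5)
Your proposal is correct and follows the paper's own route exactly: the paper also derives the corollary by combining Theorem~\ref{thm:chain-partition-only-requires-w-chains} with the fact that $\card{\bestkaf{\C}(x)}$ is bounded above by the number of chains in $\C$. The only difference is that the paper states this per-chain bound without proof, whereas you verify it (correctly, both via the anchor being the top of $C_i \cap \dset{x}{\poset}$, which is well defined since that set is down-closed in the chain, and via Lemma~\ref{lem:phi-x-equals-gamma-y-z}).
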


\begin{proof}
 The result follows immediately from Theorem~\ref{thm:chain-partition-only-requires-w-chains}
and the fact that $\card{\bestkaf{\C}(x)}$ is bounded above by the number of chains in  $\C$  for all $x \in X$.
\end{proof}

The above corollary shows that no user requires more than $w$ secrets in a chain-based enforcement scheme.

Returning to our example of $\intn(n)$, note that the width of $\intn(n)$ is $n$ as the minimal elements form the largest antichain.
Thus, any chain partition with $n$ chains requires the same number of secrets. It is not hard to show that this number is $\frac{1}{6}n(n+1)(n+2)$, which is minimum possible.
 Thus the minimal tree partition of $\intn(n)$ (discussed in Section~\ref{sec:tree-based-schemes}) requires approximately half the number of secrets required by the minimal chain partition.

\subsection{Computing a minimal chain partition}\label{sec:chain-partition-requiring-widehat-k-keys}

A chain partition imposes stronger constraints than a tree partition.
Specifically, each element in a chain partition has at most one parent and one child, whereas a tree partition only requires that each element has at most one parent.
Thus, the straightforward algorithm for computing a minimal tree partition cannot be used to compute a minimal chain partition.

Suppose $\poset=(X,\leqslant)$ is a poset of width $w$.
In general, a chain partition of $\poset$ has $\ell \geqslant w$ chains.
Theorem~\ref{thm:chain-partition-only-requires-w-chains} asserts that there exists a 
minimal chain partition comprising $w$ chains.
We now show how such a chain partition may be constructed.
In particular, we show how to transform the problem of finding a minimal chain partition $\C=(X,\leqslantc)$
into a problem of finding a minimum cost flow in a network.

Informally, a \emph{network} is a directed graph in which each edge is associated with a \emph{capacity}.
A \emph{network flow} associates each edge in a given network with a flow, which must not exceed the capacity of the edge.
Networks are widely used to model systems in which some quantity passes through channels (edges in the network) that meet at junctions (vertices); examples include traffic in a road system, fluids in pipes, or electrical current in circuits.
Our definitions for networks and network flows follow the presentation of Bang-Jensen and Gutin~\cite{BaGu02}.

\begin{definition}
 A \emph{network} is a tuple $\mathcal{N} = (D,l,u,c,\beta)$, where:
 \begin{itemize}
  \item $D = (V,A)$ is a directed graph with vertex set $V$ and edge set $A$;
  \item $l : V \times V \rightarrow \mathbb{N}$ such that $l(vv') = 0$ if $vv' \not\in A$ and $l(vv') \geqslant 0$ otherwise;
  \item $u : V \times V \rightarrow \mathbb{N}$ such that $u(vv') = 0$ if $vv' \not\in A$ and $u(vv') \geqslant l(vv') \geqslant 0$ otherwise;
  \item $c : V \times V \rightarrow \mathbb{R}$;
  \item $\beta : V \rightarrow \mathbb{R}$ such that $\sum_{v \in V} \beta(v) = 0$.
 \end{itemize}
\end{definition}

Intuitively, $l$ and $u$ represent lower and upper bounds, respectively, on how much flow can pass through each edge, and $c$ represents the cost associated with each unit of flow in each edge. The function $\beta$ represents how much flow should enter or leave the network at a given vertex. If $\beta(x)=0$, then the flow going into $x$ should be equal to the flow going out of $x$. If $\beta(x)>0$, then there should be $\beta(x)$ more flow coming out of $x$ than going into $x$. If $\beta(x)<0$, there should be $|\beta(x)|$ more flow going into $x$ than coming out of $x$.

\begin{definition}
Given a network $\mathcal{N} = (D, l,u, c, \beta)$, a function $f:V  \times V \rightarrow \mathbb{N}$ is a \emph{feasible flow} for $\mathcal{N}$ if the following conditions are satisfied:
 \begin{itemize}
  \item $u(vv') \geqslant f(vv') \geqslant l(vv')$ for every $vv' \in V \times V$;
  \item $\sum_{v' \in V}(f(vv') - f(v'v)) = \beta(v)$ for every $v \in V$.
 \end{itemize}
The \emph{cost} of $f$ is defined to be \[ \sum_{vv' \in A} c(vv') f(vv'). \]
\end{definition}

Our aim is to find a tree $\C = (X,\leqslantc)$ such that $\C$ is a chain partition of $X$ with precisely $w$ chains that minimizes $\numsecrets{\bestkaf{\C}}{\C}$.
To do this, we will construct a network $\mathcal{N}$ such that the minimum cost flow of $\mathcal{N}$ corresponds to the desired chain partition.
We can then find the minimum cost flow of $\mathcal{N}$ in polynomial time.

Every top vertex in $\C$ must have one child and no parent in $C$, every bottom vertex in $C$ must have one parent and no child in $C$, and every other vertex in $C$ must have one parent and one child.
We cannot represent this requirement directly in a network.
However, we can use the \emph{vertex splitting procedure}~\cite{BaGu02} to simulate it.
Specifically, given poset $\poset=(X,\leqslant)$, define first a directed graph $D = (V,A)$.
Let $\vin[X] = \set{\vin : x \in X}$ and $\vout[X] = \set{\vout : x \in X}$, and define the vertex set $V = \vin[X] \cup \vout[X] \cup \set{s,t}$, where $\{s,t\}\cap (\vin[X] \cup \vout[X])=\emptyset.$
Define 
the edge set $A$ as follows: for $v, v' \in \vin[X] \cup \vout[X]$, $vv' \in A$ if and only if either $v = \vin$ and $v'=\vout$ for some $x \in X$, 
or $v = \vout[x]$ and $v' = \vin[y]$ for some $x,y \in X$ such that $y \leqslant x$;
for every $v \in \vin[X]$ we have $s v \in A$; and
for every $v \in \vout[X]$ we have $v t \in A$.

Then define a network $\mathcal{N}=(D,l,u,c,\beta)$, where
\begin{align*}
 l(vv') &= %
   \begin{cases} 
    1 & \text{if $v=\vin, v'=\vout,$ where $x \in X$} \\
    0 & \text{otherwise;}
   \end{cases} \\
 u(vv') &= 
   \begin{cases}
    1 & \text{if $vv' \in A$} \\
    0 & \text{otherwise;}
   \end{cases} \\
 c(vv') &= %
  \begin{cases}
    \sum_{x \in \uset{v}{\poset}}\card{U_x}& \text{if  $v' = t$, $v = \vout[x],$ where $x\in X$} \\
    0  & \text{otherwise;} \\
  \end{cases} \\
 \beta(v) &= 
 \begin{cases}
  w & \text{if $v = s$} \\
  -w & \text{if $v = t$} \\
  0 & \text{otherwise.}
 \end{cases}
\end{align*}
We call this network the \emph{network chain-representation of $(X,\leqslant)$}.
Note that any feasible flow $f$ for this network must have $0 \leqslant f(xy) \leqslant 1$ for all $xy \in A$. 

\begin{lemma}\label{lem:min-cost-flow-equals-min-keys}
 Let $\mathcal{N}$ be the network chain-representation of an information flow policy $\poset=(X,\leqslant)$.
 Then the minimum number of secrets issued by a chain-based enforcement scheme for $(X,\leqslant)$ with $w$ chains is $\widehat{f}$, where $\widehat{f}$ is the minimum cost of a feasible flow in $\mathcal{N}$.
\end{lemma}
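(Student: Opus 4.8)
The plan is to set up a cost-preserving correspondence between feasible flows of $\mathcal{N}$ and chain partitions of $\poset$ into exactly $w$ chains, and then to read off the cost of a flow as the number of secrets of the corresponding partition via Lemma~\ref{lem:number-of-keys-from-bottom-elements}. Since $\widehat{f}$ is the minimum cost over feasible flows and the quantity we seek is the minimum of $\numsecrets{\bestkaf{\C}}{\C}$ over chain partitions $\C$ with $w$ chains, the correspondence will immediately yield the claimed equality. Throughout I read the chain edges $\vout[x]\vin[y]$ as ranging over strictly comparable pairs $y < x$, so that, apart from $s$ and $t$, the split digraph $D$ is acyclic.

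First I would show that every chain partition $\C=\set{C_1,\dots,C_w}$ induces a feasible flow of cost $\numsecrets{\bestkaf{\C}}{\C}$. For each chain $C_i$, listed from top to bottom, route a single unit of flow from $s$ through the copies $\vin[x]$, $\vout[x]$ of its elements in order and out to $t$; the required cross edges exist precisely because consecutive chain elements are strictly comparable. As $\C$ partitions $X$, each edge $\vin[x]\vout[x]$ then carries exactly one unit, meeting its lower and upper bound of $1$, and exactly $w$ units leave $s$ and enter $t$; checking the remaining capacity and conservation constraints is routine, so the flow is feasible. The only sink edges carrying flow are those leaving the out-copies of the bottom elements $b_1,\dots,b_w$ of the chains, so the cost of the flow is $\sum_{i=1}^{w}\sum_{x\in\uset{b_i}{\poset}}\card{U_x}$, which equals $\numsecrets{\bestkaf{\C}}{\C}$ by Lemma~\ref{lem:number-of-keys-from-bottom-elements}.

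Conversely, I would show that every feasible flow $f$ arises in this way. By definition $f$ is $\mathbb{N}$-valued, and since every upper bound is at most $1$, $f$ is $\set{0,1}$-valued. Because $D\setminus\set{s,t}$ is acyclic, $s$ has no in-edges and $t$ no out-edges, a flow decomposition of $f$ contains no cycles, so $f$ decomposes into exactly $\beta(s)=w$ directed $s$--$t$ paths. The forced unit on every edge $\vin[x]\vout[x]$ ensures each $x\in X$ lies on exactly one of these paths; reading the labels along a path from top to bottom gives a chain, and the $w$ chains so obtained form a chain partition $\C$ with $w$ chains whose bottom elements are exactly the labels whose out-copies send flow to $t$. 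The same computation as above gives $\mathrm{cost}(f)=\numsecrets{\bestkaf{\C}}{\C}$. Taking minima over these two mutually inverse, cost-preserving families (both non-empty, since a chain partition into $w$ chains exists by Dilworth's theorem) proves $\widehat{f}=\min_{\C}\numsecrets{\bestkaf{\C}}{\C}$, as required.

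The delicate step is the flow-decomposition argument in the converse direction: one must be sure that a feasible flow cannot ``waste'' its forced units on spurious cycles, so that all $w$ units genuinely form $s$--$t$ paths covering every element. This is exactly what the acyclicity of $D\setminus\set{s,t}$ guarantees, and it is the reason the cross edges must join strictly comparable labels (were edges $\vout[x]\vin[x]$ admitted, a unit could circulate on a $2$-cycle through $x$ while still satisfying the lower bound $1$, breaking the correspondence). Everything else---feasibility of the constructed flow and the identification of its cost through Lemma~\ref{lem:number-of-keys-from-bottom-elements}---is a direct verification.
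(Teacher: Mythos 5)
Your proof is correct and follows essentially the same route as the paper's: a cost-preserving correspondence between chain partitions into $w$ chains and feasible flows of $\mathcal{N}$, with the cost of a flow identified with $\numsecrets{\bestkaf{\C}}{\C}$ via Lemma~\ref{lem:number-of-keys-from-bottom-elements} in both directions. Your two refinements are improvements rather than deviations: the explicit flow-decomposition argument supplies the converse step that the paper dismisses with ``it is not hard to see,'' and your insistence that the cross edges $\vout\vin[y]$ require \emph{strict} comparability $y < x$ repairs a genuine slip in the paper's definition of the network (which writes $y \leqslant x$ and hence admits edges $\vout\vin$, under which the lemma would actually fail --- for the two-element chain $a > b$, a unit circulating on the $2$-cycle at $b$ together with the path through $a$ gives a feasible flow of cost $\card{U_a}$, strictly below the true minimum $\card{U_a}+\card{U_b}$).
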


\begin{proof}
 Suppose we are given a chain partition $\C=(X, \leqslantc)$.
 Consider the following flow:
 \begin{align*}
  f(\vin\vout) &= 1\qquad \text{for all $x \in X$}; \\
  f(\vout\vin[y]) &= 1\qquad \text{if $x = {\rm par}_{\C}(y)$}; \\
  f(s\vin) &= 1\qquad \text{if $x$ is the top element in a chain in  $\C$}; \\
  f(\vout t) &= 1\qquad \text{if $x$ is the bottom element in a chain in  $\C$}; \\
  f &= 0 \qquad \text{otherwise}.
 \end{align*}
 Observe that $f$ is a feasible flow.
Indeed, by construction all edges $xy$ satisfy  $u(xy) \geqslant f(xy) \geqslant l(xy)$. In the graph formed by edges $xy$ with $f(xy)=1$, it is clear that every vertex $x$ has in-degree and out-degree $1$, except for $s$ and $t$. Also, $s$ has in-degree $0$ and out-degree $w$ in this graph, and $t$ has in-degree $w$ and out-degree $0$.
As all edges $xy$ have $f(xy)=1$ or $f(xy)=0$, we have that \[ \sum_{v \in V(D)}(f(xv) - f(vx)) = \beta(x) \] for all $x$, as required.
Moreover, the cost of $f$ equals $\sum_{b\in B}\sum_{x \in \uset{b}{\poset}}\card{U_x} $, where $B$ is the set of bottom elements of chains in $\C$, which by (\ref{eq:chainsec}) equals $\mathcal{S}(\C,\phi_{\C})$.

Conversely, suppose $f$ is a feasible flow for $\mathcal{N}$.
Then we define $y\lessdotc x$ if and only if $x,y \in X$ and $f(\vout \vin[y])=1$. By the construction of $\mathcal{N}$ and definition of $f$, it is not hard to see that $\C$ is a chain partition of $X$ with $w$ chains. By construction of $\mathcal{N}$, the cost of $f$ equals $\sum_{b\in B}\sum_{x \in \uset{b}{\poset}}\card{U_x} $, where $B$ is the set of bottom elements of chains in $\C$, which by (\ref{eq:chainsec}) equals $\mathcal{S}(\C,\phi_{\C})$.
\end{proof}

\begin{lemma}\label{lem:min-cost-flow-in-poly-time}
 We can find a minimum cost flow for $\mathcal{N}$ in $O(|X|^4w)$ time.
\end{lemma}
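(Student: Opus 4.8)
The plan is to recognise the task as a minimum-cost flow computation on the network chain-representation $\mathcal{N}$ of $\poset$ and to solve it with a successive shortest path (augmenting-path) algorithm for minimum-cost flows of the kind presented by Bang-Jensen and Gutin~\cite{BaGu02}. First I would record the size of $\mathcal{N}$. It has $2\card{X}+2$ vertices, and its arcs are the $\card{X}$ split arcs $\vin\vout$, the $\card{X}$ source arcs $s\vin$, the $\card{X}$ sink arcs $\vout t$, and one arc $\vout\vin[y]$ for each pair $x,y\in X$ with $y\leqslant x$. Since the number of such comparable pairs is at most $\card{X}^2$, the network has $O(\card{X})$ vertices and $m=O(\card{X}^2)$ arcs; moreover every cost is a non-negative integer and every capacity lies in $\set{0,1}$.

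The only feature that prevents treating $\mathcal{N}$ as a textbook min-cost flow instance is the family of unit lower bounds $l(\vin\vout)=1$, which force every element of $X$ to be covered. I would remove these using the standard lower-bound elimination transformation~\cite{BaGu02}: route one unit through each split arc in advance (charging its cost), recording a surplus of one unit at each $\vout$ and a deficit of one unit at each $\vin$; each split arc then has residual capacity $0$ and may be deleted. This produces an equivalent transshipment instance with no lower bounds, in which the total amount of flow to be routed is $w$ units out of $s$ plus one unit for each of the $\card{X}$ deleted split arcs. A feasible flow exists: by Dilworth's theorem and the forward construction in the proof of Lemma~\ref{lem:min-cost-flow-equals-min-keys}, a chain partition of width $w$ yields one, so the transformation does not fail.

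I would then run successive shortest path augmentation on this instance. Because every residual arc has unit capacity, each augmentation pushes exactly one unit along a minimum-cost path from a surplus vertex to a deficit vertex, so the number of augmentations is bounded by the total flow routed, namely $O(\card{X}+w)$. Residual arcs can carry negative cost (the reverse arc of a positive-cost arc), so each minimum-cost path would be computed with the Bellman-Ford algorithm in time $O(\card{X}\cdot m)=O(\card{X}^3)$; this is legitimate because the successive shortest path invariant keeps the residual network free of negative-cost cycles, so the shortest paths remain well defined. Multiplying the two factors gives total time $O((\card{X}+w)\card{X}^3)=O(\card{X}^4+w\card{X}^3)$, which is within the claimed bound $O(\card{X}^4 w)$. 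Integrality of the computed flow is automatic as all data are integral, so by Lemma~\ref{lem:min-cost-flow-equals-min-keys} it encodes a minimal chain partition of width $w$.

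I expect the main obstacle to be the careful interaction of the lower bounds with the negative residual costs rather than the arithmetic: one must confirm both that a feasible flow exists (so the lower-bound reduction is valid) and that minimum-cost augmenting paths stay well defined throughout, i.e. that no negative-cost residual cycle is ever created. Both points are settled by the standard optimality theory of min-cost flows, but they are exactly where a naive argument could slip; the remaining steps of bounding the vertices, arcs and iterations are routine.
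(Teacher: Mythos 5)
Your proposal is correct and takes essentially the same route as the paper: both eliminate the unit lower bounds on the arcs $x_{\mathrm{in}}x_{\mathrm{out}}$ by the standard transformation (the paper does this in a remark following the lemma) and then solve a min-cost flow instance with $n=O(\card{X})$ vertices and $m=O(\card{X}^2)$ arcs using the successive-shortest-path machinery of Bang-Jensen and Gutin, which is exactly the buildup algorithm the paper cites. The only difference is one of accounting: the paper invokes the black-box bound $O(n^2mM)$ with $M=w$, whereas you bound the number of unit augmentations directly by the total supply $O(\card{X}+w)$ and pay $O(\card{X}^3)$ per Bellman--Ford call, which in fact yields the slightly sharper bound $O(\card{X}^4+w\card{X}^3)\subseteq O(\card{X}^4 w)$.
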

\begin{proof} Recall that computing $w$ can be done in time $O(|X|^{2.5}).$ To compute $\sum_{x \in \uset{y}{\poset}}\card{U_x} $ for each $y\in X$ requires time $O(|E_{\max}|+|X|)$ using depth-first search from $y$ in the digraph obtained from $H^*(X)$ by changing orientation of every edge. Thus, to compute $\sum_{x \in \uset{y}{\poset}}\card{U_x} $ for all $y\in X$ requires time $O(|X|(|E_{\max}|+|X|)).$

The well-known buildup algorithm (see~\cite[\S4.10.5]{BaGu02}, for example) finds a minimum cost flow for a network with $n$ vertices and $m$ edges in time $O(n^2 m M)$, where $M$ denotes the maximum of all absolute values of balance demands on vertices.
By construction of $\mathcal{N}$, we have that $n = 2|X| + 2 = O(|X|)$, $m = O(n^2)=O(|X|^2)$, and $M = w$.
Thus we get the desired running time.
\end{proof}

\begin{remark}
Strictly speaking, the buildup algorithm assumes that all lower bounds on edges are $0$. 
In its current form, our network does not satisfy this condition.
However, we can satisfy this condition, given $\mathcal{N} = (D,l,u,c,\beta)$, by defining the network $\mathcal{N}' = (D,l',u',c,\beta')$, where
\begin{alignat*}{2}
 l'(xy) &= 0 & \qquad & \beta'(x) = \beta(x) - l(xy) \\
 u'(xy) &= u(xy) - l(xy) & \qquad & \beta'(y) = \beta(y) + l(xy) 
\end{alignat*}
Then the minimum cost flow $f'$ for $\mathcal{N}'$ will have cost exactly $\sum_{xy} l(xy)c(xy)$ less than the minimum cost flow for $\mathcal{N}$, and $f'$ can be transformed into a minimum cost feasible flow $f$ for $\mathcal{N}$ by setting $f(xy) = f'(xy)+l(xy)$.
\end{remark}

We are now able to prove our main result, for this section which is, essentially, a corollary of Theorem~\ref{thm:chain-partition-only-requires-w-chains} and Lemmas~\ref{lem:min-cost-flow-equals-min-keys} and~\ref{lem:min-cost-flow-in-poly-time}.

\begin{theorem}\label{thm:main-theorem}
 Let $\poset=(X,\leqslant)$ be an information flow policy of width $w$.
Then we can find a minimal chain partition comprising $w$ chains in time  $O(|X|^4w)$. In such a chain partition no user requires more than $w$ secrets.
\end{theorem}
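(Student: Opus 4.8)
The plan is to assemble the statement from the three results it is presented as following from, taking particular care over the minimality claim and the running-time bookkeeping. First I would use Theorem~\ref{thm:chain-partition-only-requires-w-chains} to restrict attention to chain partitions with exactly $w$ chains without loss: that theorem guarantees that the global minimum of $\numsecrets{\bestkaf{\C}}{\C}$ over all chain partitions $\C$ is already attained by some partition with $w$ chains. Consequently, any chain partition that minimizes the number of secrets \emph{among those with $w$ chains} is automatically a minimal chain partition overall. This observation is the crux of the argument: without it, the flow computation below would only certify optimality within the class of $w$-chain partitions rather than globally.

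Second, I would invoke the cost-preserving correspondence established in the proof of Lemma~\ref{lem:min-cost-flow-equals-min-keys}: feasible (integral) flows of the network chain-representation $\mathcal{N}$ of $(X,\leqslant)$ correspond to chain partitions with $w$ chains, and the cost of a flow equals $\numsecrets{\bestkaf{\C}}{\C}$ for the associated partition. Thus, computing a minimum cost flow $\widehat{f}$ and reading off the partition via the rule $y \lessdotc x$ if and only if $\widehat{f}(\vout\vin[y]) = 1$ yields a chain partition $\C$ with $w$ chains whose secret count equals $\widehat{f}$, the minimum over all $w$-chain partitions. By the first step, this $\C$ is then a minimal chain partition, and it comprises exactly $w$ chains by construction.

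Third, for the running time I would appeal to Lemma~\ref{lem:min-cost-flow-in-poly-time}, which computes $\widehat{f}$ (and hence $\C$) in $O(\card{X}^4 w)$ time; the ancillary steps---computing $w$ in $O(\card{X}^{2.5})$, computing the edge costs $\sum_{x\in\uset{y}{\poset}}\card{U_x}$ for all $y$ in $O(\card{X}(\card{E_{\max}}+\card{X}))$, and reconstructing $\C$ from $\widehat{f}$ in $O(\card{X}^2)$---are all dominated by this term, so the overall bound is $O(\card{X}^4 w)$.

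Finally, the ``no user requires more than $w$ secrets'' clause is immediate from Corollary~\ref{cor:users-only-require-w-secrets}, or directly: for each $x\in X$ the quantity $\card{\bestkaf{\C}(x)}$, which is precisely the number of secrets issued to each user in $U_x$, is bounded above by the number of chains in $\C$, namely $w$. I expect no genuine obstacle here; the only care needed is in the first two steps, ensuring that \emph{global} minimality (not merely minimality within the class of $w$-chain partitions) is correctly transferred through the flow correspondence.
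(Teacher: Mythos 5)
Your proposal is correct and takes essentially the same route as the paper's own proof: Theorem~\ref{thm:chain-partition-only-requires-w-chains} to reduce global minimality to minimality over $w$-chain partitions, Lemma~\ref{lem:min-cost-flow-equals-min-keys} for the cost-preserving flow/partition correspondence, Lemma~\ref{lem:min-cost-flow-in-poly-time} for the $O(\card{X}^4 w)$ running time, and the bound $\card{\bestkaf{\C}(x)} \leqslant w$ for the per-user secret count. Your explicit observation that global minimality (not just minimality among $w$-chain partitions) must be transferred through the flow correspondence is precisely the implicit role Theorem~\ref{thm:chain-partition-only-requires-w-chains} plays in the paper's argument.
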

\begin{proof}
Let ${\cal S}$ denote the minimum number of secrets issued by a chain-based enforcement scheme for $X$.
By Theorem \ref{thm:chain-partition-only-requires-w-chains}, there exists a chain partition  that has exactly $w$ chains, for which the corresponding chain-based enforcement scheme only requires ${\cal S}$ secrets.
Then by Lemma \ref{lem:min-cost-flow-equals-min-keys},  ${\cal S}$  is equal to the minimum cost of a feasible flow in $\mathcal{N}$, the network chain-representation of $\poset$.
By Lemma \ref{lem:min-cost-flow-in-poly-time}, such a flow can be found in $O(|X|^4w)$ time, and this flow can be easily transformed into the corresponding chain partition $\C=(X, \leqslantc)$. 
Finally, by definition of $\bestkaf{\C}(x)$, $|\bestkaf{\C}(x)| \leqslant w$ for each $x \in X$ and therefore no user requires more than $w$ secrets.
\end{proof}

%

\section{Concluding Remarks}\label{sec:conclusion}

In this paper, we introduced the concept of a tree partition, generalizing prior work on chain partitions and tree-based enforcement schemes.
We have proved that it is possible to compute optimal chain and tree partitions for an arbitrary information flow policy in polynomial time.
And we have proved that there exist secure instantiations of enforcement schemes based on tree partitions.
In short, we have shown that it is possible to construct forest-based cryptographic enforcement schemes for information flow policies efficiently.

Perhaps the most important contribution of our work on cryptographic enforcement schemes based on tree and chain partitions is to provide alternative trade-offs between the parameters of such enforcement schemes.
These additional trade-offs provide data owners with a greater range of potential enforcement schemes, enabling them to select the most appropriate for their particular information flow policy and deployment constraints (such as storage and connectivity capabilities of end-user devices).
We might, for example, wish to use an existing scheme that requires each device to store a single secret when storage is limited.
Alternatively, we might wish to use a chain-based scheme when the distribution of public information is difficult and we wish to impose a small upper bound on the number of secrets that any device needs to store.
We might use a tree-based scheme if distribution of public information is difficult and we wish to minimize the amount of data we wish to transmit to the user population.

%
%

Another difference between minimal tree-based and chain-based schemes is that computing the former is significantly faster than the latter as the former can essentially be computed by a simple greedy  algorithm, while the latter requires a more sophisticated and much slower minimum cost flow algorithm. While still polynomial-time, minimum cost flow algorithms may be too slow when $|X|$ is large.

In future work, we hope to investigate the difficulty of finding a tree partition in which the worst-case derivation time is as similar as possible for all users (whilst still minimizing the number of secrets issued).

\bibliography{refs}
\bibliographystyle{abbrv}

\end{document}